\newtheorem{theorem}{Theorem}
\DeclareMathAlphabet\mathbfcal{OMS}{cmsy}{b}{n}
\def\endthebibliography{%
	\def\@noitemerr{\@latex@warning{Empty `thebibliography' environment}}%
	\endlist
}
\begin{document}
	
	\title{Countering Eavesdroppers with Meta-learning-based Cooperative Ambient Backscatter Communications}
	
	\author{Nam H. Chu, Nguyen Van Huynh, Diep N. Nguyen, Dinh Thai Hoang,\\ 
		Shimin Gong, Tao Shu, Eryk Dutkiewicz, and Khoa T. Phan
		\thanks{Nam H. Chu, Diep~N.~Nguyen, Dinh~Thai~Hoang, and Eryk~Dutkiewicz are with the School of Electrical and Data Engineering, University of Technology Sydney, Australia (e-mails: namhoai.chu@student.uts.edu.au, \{diep.nguyen, hoang.dinh, eryk.dutkiewicz\}@uts.edu.au).}
		\thanks{Nguyen Van Huynh is with School of Computing, Engineering and the Built Environment, Edinburgh Napier University, Edinburgh, United Kingdom (e-mail: h.nguyen2@napier.ac.uk).}%
		\thanks{Shimin Gong is with the School of Intelligent Systems Engineering, Sun Yat-sen University, Guangzhou, China. (e-mail: gongshm5@mail.sysu.edu.cn).}
		\thanks{Tao Shu is with the Department of Computer Science and Software Engineering, Auburn University, Auburn, AL 36849. (e-mail: tshu@auburn.edu).}
		\thanks{Khoa~T.~Phan is with School of Engineering and Mathematical Sciences, Department of Computer Science and Information Technology, La Trobe University, Melbourne, Australia (e-mail: K.Phan@latrobe.edu.au).}%
		\thanks{Preliminary results in this paper were presented at the IEEE WCNC Conference, 2023~\cite{huynh2023defeating}.}
	}
	\maketitle
	
	\begin{abstract}
		This article introduces a novel lightweight framework using ambient backscattering communications to counter eavesdroppers.  
		In particular, our framework divides an original message into two parts.
		The first part, i.e., the active-transmit message, is transmitted by the transmitter using conventional RF signals. 
		Simultaneously, the second part, i.e., the backscatter message, is transmitted by an ambient backscatter tag that backscatters upon the active signals emitted by the transmitter. 
		Notably, the backscatter tag does not generate its own signal, making it difficult for an eavesdropper to detect the backscattered signals unless they have prior knowledge of the system. 
		Here, we assume that without decoding/knowing the backscatter message, the eavesdropper is unable to decode the original message.
		Even in scenarios where the eavesdropper can capture both messages, reconstructing the original message is a complex task without understanding the intricacies of the message-splitting mechanism.
		A challenge in our proposed framework is to effectively decode the backscattered signals at the receiver, often accomplished using the maximum likelihood (MLK) approach. 
		However, such a method may require a complex mathematical model together with perfect channel state information (CSI).
		To address this issue, we develop a novel deep meta-learning-based signal detector that can not only effectively decode the weak backscattered signals without requiring perfect CSI but also quickly adapt to a new wireless environment with very little knowledge. 
		Simulation results show that our proposed learning approach, without requiring perfect CSI and complex mathematical model, can achieve a bit error ratio close to that of the MLK-based approach.
		They also clearly show the efficiency of the proposed approach in dealing with eavesdropping attacks and the lack of {training data for deep learning models} in practical scenarios.		
	\end{abstract}
	
	\begin{IEEEkeywords}
		Anti-eavesdropping, backscatter communications, physical layer security, covert communications, meta-learning, signal detection, and deep learning.
	\end{IEEEkeywords}
	
	\section{Introduction}
	\label{Sec:intro}
	
	{Given the broadcast nature of the wireless medium, there exists a potential for illegal tapping of any wireless channel, thereby posing a significant risk to the confidentiality of transmitted information, such as personal details or business secrets. 
		Consequently, ensuring privacy and security in wireless communication systems has consistently emerged as a paramount challenge.}
	Among security threats in wireless communications, the eavesdropping attack is one of the most common types of wireless attacks.
	To {perform} the attack, the eavesdropper usually stays close to the victim system to ``wiretap'' the legitimate wireless channel and acquire {exchanged} information.
	Since the eavesdropper operates passively without introducing noise or {altering} transmit signals, detecting and preventing {eavesdropping} attacks are usually challenging.	  
	
	To deal with eavesdropping attacks, conventional approaches primarily {rely} on implementing encryption at the application and transportation layers~\cite{hu2018covert}. 
	Nevertheless, these approaches possess several {issues} that significantly limit their {practical applications}, especially in resource-constrained devices. 
	{Firstly}, these approaches require additional computing resources to facilitate the encryption and decryption, making them less practical or even infeasible for {resource-limited} devices such as IoT devices\cite{zhang2016secure}. 
	{Secondly}, distributing and managing cryptographic keys also require additional communication resources (e.g., frequencies and {transmission} power) and are very challenging tasks, especially {in} decentralized systems comprising a large number of mobile devices~\cite{lu2020intelligent}.
	Finally, an eavesdropper with sufficient computational capacity can decrypt the encrypted data, especially with the recent advances in quantum computing~\cite{google2015quantum}.  
	In addition, by using side-channel analysis, a strong eavesdropper can defeat many cryptographic schemes, even {those} with very robust schemes~\cite{hu2018covert},~\cite{bash2015hiding}.
	
	Unlike cryptography-based approaches, physical layer security leverages physical characteristics of wireless channels (e.g., the signal strength) to protect information from eavesdroppers without requiring additional distributing and managing of cryptographic keys~\cite{mukherjee2014principles}. 
	In physical layer security, friendly jamming is a popular approach in which artificial {noise is} intentionally {injected} into wireless channels to disrupt eavesdroppers' signal reception~\cite{soltani2018covert, siyari2017friendly, mobini2019wireless}.  
	However, this approach may not always ensure a positive secrecy rate, which is the difference of capacity between the legitimate channel {(from the transmitter to the intended receiver)} and the ``tapped'' channel {(from the transmitter to the eavesdropper)}.
	Moreover, the generated interferences may also severely degrade signal receptions at neighboring legitimate devices, especially in densely populated wireless networks.    
	Recently, friendly jamming and beamforming techniques have been leveraged in cooperative transmissions to counter eavesdropping attacks~\cite{yang2017optimal}.
	In particular, the cooperative transmission uses relays to forward data from a transmitter to a receiver.
	In the cooperative jamming mode, relays can also inject noises (i.e., jamming signals) to confuse eavesdroppers in the relays' coverages.
	Whereas in the cooperative beamforming mode, these relays perform distributed beamforming toward the legitimated receiver to minimize the signal strength at eavesdroppers.  
	The main drawback of friendly jamming- and beamforming-based solutions is that they generally require prior information {of the legitimate channel state information (CSI)} in order to achieve effective protection performance.
	However, acquiring such information is often impractical and challenging in real-world scenarios.
	Moreover, these techniques require additional resources (e.g., energy and computing) for generating jamming signals and performing beamforming.
	
	Given the above, this article leverages the cooperative Ambient Backscatter (AmB) {communications} in which the transmitter is equipped with an AmB tag that can backscatter ambient radio signals to convey information to the receiver~\cite{hoang2020ambient}.
	In the literature, the AmB {communications} {have} been well investigated in various aspects, such as hardware design~\cite{liu2013ambient}, performance improvement~\cite{bharadia2015backfi, parks2014turbocharging}, power reduction~\cite{kellogg2016passive, wang2017fm}, ambient backscatter-based applications~\cite{nikitin2012passive, hoang2016tradeoff}, and security~\cite{li2020secrecy, li2021hardware, li2021physical}.  
	However, utilizing AmB technology to counter eavesdropping attacks has not yet {been} well investigated. 
	For example, the works in~\cite{li2020secrecy, li2021hardware} investigate the security and reliability of AmB-based network with imperfect hardware elements, while the authors in~\cite{li2021physical} analyze the application of physical layer security for AmB communications.
	{Unlike these} works in the literature, our proposed solution exploits the advantages of AmB communications together with {a simple} encoding technique to secure the transmitted information against eavesdroppers.
	In particular, the original message is split into two {parts}: (i) the active message transmitted by the transmitter using conventional active transmissions and (ii) the AmB message transmitted by the backscatter tag using AmB transmissions.
	Then, the receiver reconstructs the original message based on both active and AmB messages.
	Note that the AmB tag operates in a passive manner, meaning it does not actively transmit signals.
	Instead, it utilizes the active signals from the transmitter to backscatter the AmB message without requiring additional power.
	In this way, the AmB message can be transmitted on the same frequency and at the same time as the active message.
	However, the AmB signal strength at the receiver is significantly lower than that of the active signal.
	Hence, the AmB signal can be considered as pseudo background noise for the active signal~\cite{van2018reinforcement, zhang2016secure}. 
	As such, without knowledge about the system in advance (i.e., the settings of AmB transmission), the eavesdropper is even not aware of the existence of the AmB message. 
	We assume that without the AmB message, the eavesdropper can not recover the original message.
	
	{It is worth noting that if eavesdropper is aware of the AmB message, it is still a challenging task to capture and reconstruct the original message.
		Specifically, the values of resistors and capacitors in the AmB circuit are different for different backscatter rates.
		As such, even if the eavesdropper deploys the AmB circuit but do not know the exact backscatter rate, it still cannot decode the backscatter signals~\cite{liu2013ambient}.
		In addition,} in the worse case in which the eavesdropper knows the exact backscatter rate in advance, it 
	still does not know how to construct the original message based on the active and AmB messages since the message encoding technique is unknown to them.
	To quantify the security of our proposed anti-eavesdropping solution in the worst case, this paper considers the guessing entropy {metric}~\cite{boztas1997comments}.
	It is {worth noting} that due to the lower rate of AmB transmission compared to that of the active transmission, the AmB message's size is smaller than that of the active message. 
	Thus, in a system with sufficient computing and energy capacities for encryption/decryption, the AmB message can be used to carry the encryption key, while the active message carries the encrypted message.
	
	To detect backscattered signals at the receiver, traditional methods often employ MLK, which may require complex mathematical models and perfect CSI to achieve high detection performance~\cite{guo2018noncoherent, guo2018exploiting}. 
	Thus, this approach introduces significant complexity and dependency on accurate CSI.
	To overcome these limitations, we develop a low-complexity {Deep Learning (DL)}-based detector that can effectively detect the backscattered signals. 
	The rationale behind is that DL has the ability to learn directly from data (e.g., received signals), eliminating the need of complex mathematical models and perfect CSI.
	Note that in the literature, several works consider using DL for signal detection and channel estimation~\cite{ye2018power, fan2019cnn, rajendran2018deep, oshea2018over, van2022defeating}.
	However, most of them consider conventional signals (e.g., OFDM, BPSK, and QAM64 signals), and thus it may {not perform well} for very weak signals like AmB signals.
	In particular, the authors in~\cite{ye2018power, fan2019cnn} consider DL-based detectors for OFDM signals.
	Whereas, the studies in~\cite{rajendran2018deep} and \cite{oshea2018over} consider the signal classification task, i.e., predicting the modulation type of signals.
	Unlike the above works, in~\cite{van2022defeating}, the authors propose an AmB signal detector based on the Long Short-Term Memory (LSTM) architecture.
	Since LSTM requires more computing {capability} than that of the conventional architecture, e.g., fully connected architecture~\cite{freire2022computational}, it does not fit well in our considered lightweight anti-eavesdropping framework.
	Therefore, this work proposes a new  detector with low complexity elements, such as {$tanh$} activation, and a few small-size fully connected hidden layers.
	{By doing so, our DL-based detector offers a more practical and efficient solution for backscatter signal detection at the receiver.}
	
	Although DL-based detectors can achieve good detection performance, they usually require a large amount of high-quality data {(i.e., a collection of received signals)} for the training process~\cite{nichol2018first}.	
	This makes DL less efficient in practice when data is expensive and/or contains noise due to the {wireless} environment's dynamics and uncertainty.
	For example, new objects {(e.g. {the passing of a} bus)} can significantly impact wireless channel conditions, even changing links from line-of-sight (LOS) to non-line-of-sight (NLOS).
	As a result, in conventional DL, models may need to be retrained from scratch with newly collected data, and thus it is a time-consuming task~\cite{hospedales2021meta}.
	In this context, meta-learning (i.e., {learning how to learn}) emerges as a promising approach to quickly learn a new task with limited training data~\cite{finn2017model}. 
	Given the above, this work develops a meta-learning algorithm to train the DL model to quickly achieve a good detection performance in new environments. 
	Extensive simulation results substantiate the effectiveness of our proposed solution in effectively mitigating eavesdropping attacks. 
	{They also show that the proposed DL-based signal detector, without requiring perfect CSI, can attain a comparable Bit Error Ratio (BER) to the MLK-based detector, which is an optimal detector requiring a complex mathematical model and perfect CSI.}
	The main contributions of this work are:
	\begin{itemize}
		\item Propose a novel anti-eavesdropping framework leveraging the AmB {communications}. 
		In particular, we propose to use a low-cost and low-complexity AmB tag to assist in transmitting a part of information to the receiver by backscattering right on the transmit signals. 
		This solution is expected to open a new direction for future anti-{eavesdropping} communications. 
		\item Develop the DL-based detector to detect the AmB signals {at} the receiver to {overcome limitations of the} conventional MLK-based approaches.
		In particular, we propose a low complexity data preprocessing and design a lightweight Deep Neural Network (DNN) architecture to effectively detect AmB signals.
		\item Develop a meta-learning algorithm to quickly achieve {high} detection performance in new environments with little knowledge.
		The main idea of meta-learning is to utilize knowledge obtained from similar environments in order to {reduce} the size of the necessary training dataset, while still preserving the quality of learning. 
		\item Perform {extensive} simulations in several scenarios to get insights into various aspects of our proposed framework, such as maximum achievable rate, security, and robustness. 
		We also analyze the security level of our proposed framework based on the guessing entropy for the worse case when the eavesdropper has some prior information about the AmB communication settings.
	\end{itemize}
	
	The remaining of this paper is organized as follows.
	Our proposed anti-eavesdropping system and the channel model are discussed in Sections~\ref{Sec:system} and \ref{Sec.signalmodel}.
	Then, Sections IV and V present the MLK-based detector and our proposed DL-based detector for the AmB signal, respectively.
	Section VI discusses our simulation results.
	Finally, Section VII wraps up our paper with a conclusion.	
	
	\section{System Model}
	\label{Sec:system}
	\begin{figure}[t]
		\centering
		\includegraphics[width=0.50\linewidth]{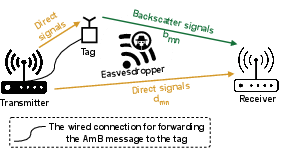}
		\caption{Anti-eavesdropping attack System model.}
		\label{Fig.system_model}
		\vspace{-20pt}
	\end{figure}
	
	This work considers a wireless network with the presence of an eavesdropper, as shown in Fig.~\ref{Fig.system_model}.
	Here, the transmitter has a single antenna, while the receiver is equipped with $M$ antennas. 
	The eavesdropper aims to wiretap the transmitted signals to {gather} the information in this channel.
	To cope with the eavesdropper, this work deploys a low-cost and low-complexity tag equipped with {the} AmB technology, allowing it to send data by backscattering ambient RF signals without producing active signals as in conventional active transmissions.
	Specifically, the AmB tag has two operation states, including (i) the absorbing state, where the tag does not reflect incoming signals, and (ii) the reflecting state, where the tag reflects incoming signals.
	In this way, the AmB tag can transmit data without using any active RF components.
	Before sending a message to the receiver, the transmitter first encodes it into two messages: (i) an active message for the transmitter and (ii) an AmB message forwarded to the AmB tag via the wired channel. 

	The active message is then directly transmitted by the transmitter to the receiver using {the conventional RF transmission method.}
	At the same time, when the transmitter transmits the active message, the AmB tag will leverage its RF signals to {backscatter} the AmB message~\cite{hoang2020ambient, huynh2018survey}. 
	Thus, the receiver will receive two data streams simultaneously, one over the conventional channel $d_{mn}$ and another over the backscatter channel $b_{mn}$, as depicted in Fig.~\ref{Fig.system_model}.
	Note that instead of producing active signals, the AmB tag only backscatters/reflects signals.
	Therefore, wiretapping AmB signals is intractable unless the eavesdropper {has prior knowledge about} the system configuration, i.e., the usage of AmB and the exact backscatter rate.
	As a result, our system can provide a new deception strategy for data transmissions.
	Specifically, the eavesdropper is attracted by the active signals generated by the transmitter, so it pays less attention to (or {is} even unaware of) the presence of AmB communications.
	{Here, we assume} that without obtaining information from the AmB message, the eavesdropper cannot decode the information from the original message. 
	Moreover, even in the worse case in which the eavesdropper knows the presence of AmB communication in the system, they cannot obtain the original message straightforwardly without {the} knowledge about the system's message-splitting mechanism.
	As a result, our proposed approach can deal with eavesdropping attacks in wireless systems.
	\begin{figure}[t]
		\centering
		\includegraphics[width=0.50\linewidth]{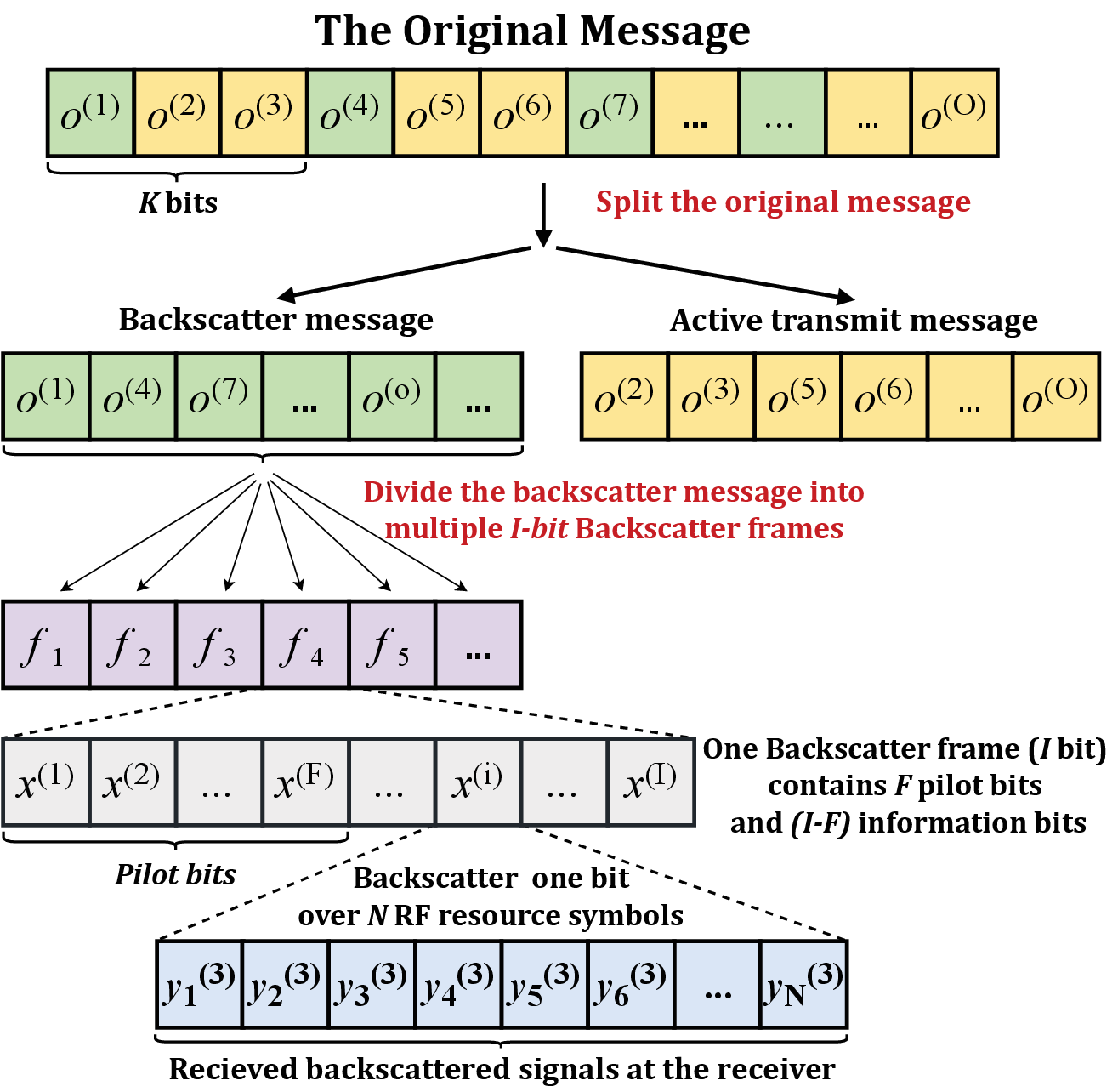}
		\caption{Message-splitting mechanism.}
		\label{Fig.splitmessage}
		\vspace{-20pt}
	\end{figure}
	
	The detail of our proposed encoding mechanism is depicted in Fig.~\ref{Fig.splitmessage}.
	Specifically, at the beginning, an encoding mechanism is used to split the original message into two parts, i.e., backscatter message and active transmit message. 	
	Note that {our framework here and following analysis can adopt} any encoding mechanism, and the design of encoding mechanism is out of {the} scope of this work. 
	Since the AmB rate is {usually} lower than the active transmission rate, the AmB message's size can be designed to be smaller than the active message's size.
	As such, the AmB message is constructed by taking bits at every $K$ bits of the original message.
	By doing so, the system security under the presence of the eavesdropper can be significantly improved because the eavesdropper is unable to derive the message splitting mechanism and the backscatter message.
	{Note that to improve the detection performance, we incorporate $F$ pilot bits into the AmB frame, as depicted in Fig.~\ref{Fig.splitmessage}.
		The specific utilization and significance of these pilot bits will be extensively discussed in detail in Section~\ref{subsec:data_procesing}.}

	\section{Channel Model}
	\label{Sec.signalmodel}
	This section presents the channel model of our considered system in detail. 
	In particular, the AmB rate should be significant lower than the sampling rate of the transmitter's signals so that the receiver can decode the backscattered signals with low BERs~\cite{huynh2018survey, liu2013ambient, van2022defeating}.
	Formally, the transmit signals' sampling rate is assumed to be $N$ times higher than the AmB rate.
	In other words, each bit of the backscatter message is backscattered over $N$ transmitter symbols.
	In the considered system, the receiver has $M$ antennas {(\mbox{$M\geq1$})}. 
	Let $y_{mn}$ denote the signal that the $m$-th antenna of the receiver receives at the $n$-th time.
	As illustrated in Fig.~\ref{Fig.system_model}, $y_{mn}$ comprises (i) the {active signal transmitted} on the direct link, (ii) the signals backscattered by the AmB tag on the backscatter link, and (iii) noise from the surrounding environment. 
	As such, $y_{mn}$ can be expressed as follows:
	\begin{equation}
		\label{eq:totalreceive}
		y_{mn} = \underbrace{d_{mn}}_{\text{direct link}} + \underbrace{b_{mn}}_{\text{backscatter link}} + ~\sigma_{mn},
	\end{equation}
	where 
	$\sigma_{mn}$ is the noise following the unit-variance and zero-mean circularly symmetric complex Gaussian (CSCG), denoted by \mbox{$\sigma_{mn} \sim \mathcal{CN}(0,1)$}. 
	In the following, the signals on the direct and backscatter links are formally formulated.
	
	\subsection{Direct Channel}
	On the direct channel, i.e., the transmitter-receiver link, the average power received by the receiver is formulated by
	\begin{equation}
		P_{tr} = \frac{ \kappa P_t G_t G_r}{ {L_r}^{\upsilon}},
	\end{equation}
	where $P_t$ is the transmitter's transmit power, and $\kappa = {(\frac{\lambda}{4\pi})}^2$ {in which} $\lambda$ is the wavelength,  $\upsilon$ denotes the path loss exponent.
	$L_r$ is the transmitter-receiver distance. 
	The antenna gains of the transmitter and the receiver are denoted by $G_t$ and $G_r$, respectively. 
	At time instant $n$, we denote the signal transmitted by the transmitter as $s_{tn}$. 
	Then, at the receiver's $m$-th antenna, the direct link signal is given by
	\begin{equation}
		\label{eq:direct}
		d_{mn} = f_{dm} \sqrt{P_{tr}}s_{tn}, 
	\end{equation}
	where $f_{dm}$ represents the Rayleigh fading such that $\mathbb{E}[|f_{dm} |^2] = 1$~\cite{van2022defeating}. Note that the proposed AmB tag uses the AmB communication technology to backscatter $s_{tn}$ without using any dedicated energy sources. 
	Since the AmB tag does not know the active message, $s_{tn}$ appears as random signals~\cite{guo2018noncoherent,zhang2018constellation, van2022defeating}. Therefore, we can assume that \mbox{$s_{tn} \sim \mathcal{CN}(0,1)$}.
	
	\subsection{Ambient Backscatter Channel}
	As mentioned, the AmB tag backscatters the transmitter's active signals to transmit AmB messages. 
	In the following, we present the formal formulation of the AmB signals at the receiver. 
	Firstly, the average power received by the AmB tag can be defined by
	\begin{equation}
		P_{b} = \frac{\kappa P_t G_t G_b}{{L_b}^{\upsilon}},
	\end{equation}
	where $G_b$ denotes the antenna gain of the AmB tag, and $L_b$ denotes the transmitter-tag distance. 
	Let $g_r$ denote the Rayleigh fading of the link from the transmitter to the AmB tag, 
	then the active signals from the transmitter at the AmB tag is given by
	\begin{equation}
		l_n = \sqrt{P_{b}}g_r s_{tn}.
	\end{equation}
	As discussed above, the key idea of the AmB communication is to absorb or reflect surrounding ambient RF signals to convey information without generating any active signals. 
	As such, we denote $e$ as the state of the AmB tag.
	Specifically, \mbox{$e=1$} when the tag reflects the transmitter's signals, i.e., transmitting bits 1, and \mbox{$e=0$} when the tag absorbs the transmitter's signals, i.e., transmitting bits 0. 
	Because each {backscattered} bit is backscattered over $N$ transmitter's symbols, state $e$ of the AmB tag remains unchanged during this period. 
	The backscattered signals then can be given by
	\begin{equation}
		s_{bn} = \gamma l_n e,
	\end{equation}
	where $\gamma$ is the reflection coefficient. 
	It is worth noting that $\gamma$ captures all properties of the AmB tag such as load impedance and antenna impedance~\cite{zhang2018constellation}. 
	Let $f_{bm}$ represent the Rayleigh fading of the AmB channel, and $L_e$ denotes the AmB tag-to-receiver distance. 
	Here, we can assume that $\mathbb{E}[|f_{bm}|^2] = 1$ and $\mathbb{E}[|g_r|^2] = 1$ without loss of generality~\cite{van2022defeating}. 
	Then, the signals in the AmB link received by the $m$-th antenna is given by
	\begin{equation}
		\label{eq:bn}
		\begin{aligned}
			b_{mn} &= f_{bm} \sqrt{\frac{G_b G_r \kappa}{{L_e}^{\upsilon}}} s_{bn}\\
			&= f_{bm} \sqrt{\frac{G_b G_r \kappa}{{L_e}^{\upsilon}}} \gamma e \Big(g_r \sqrt{P_{b}}s_{tn}\Big)\\
			&= f_{bm} e \Big(g_r \sqrt{\frac{\kappa |\gamma|^2 G_b G_r P_b}{{L_e}^{\upsilon}}} s_{tn}\Big)\\
			&= f_{bm} e \Big(g_r \sqrt{\frac{\kappa |\gamma|^2 G_b G_r \kappa P_t G_t G_b}{{L_e}^{\upsilon} {L_b}^{\upsilon}}}s_{tn}\Big)\\
			&= f_{bm} e \Big(g_r \sqrt{\frac{\kappa |\gamma|^2 P_{tr}{G_b}^2{L_r}^{\upsilon}}{{L_b}^{\upsilon}{L_e}^{\upsilon}}} s_{tn}\Big).
		\end{aligned}
	\end{equation}	
	By denoting $\tilde{\alpha}_r = \frac{\kappa {|\gamma|^2}{G_b}^2{L_r}^{\upsilon}} {{L_b}^{\upsilon} {L_e}^{\upsilon}}$, (\ref{eq:bn}) can be rewritten as 
	\begin{equation}
		\label{eq:backscatter}
		b_{mn} = f_{bm} e \Big(g_r \sqrt{\tilde{\alpha}_r P_{tr}}s_{tn} \Big).
	\end{equation}
	
	\subsection{Received Signals}
	\label{sec:received}
	Now, we can obtain the received signals at the receiver's $m$-th antenna by substituting (\ref{eq:direct}) and (\ref{eq:backscatter}) to (\ref{eq:totalreceive}) as follows:
	\begin{equation}
		\label{eq:total_received_final}
		\begin{aligned}
			y_{mn} &= \underbrace{d_{mn}}_{\text{direct link}} + \underbrace{b_{mn}}_{\text{backscatter link}} + \sigma_{mn},\\
			&=f_{dm} \sqrt{P_{tr}}s_{tn} + f_{bm} e \Big(g_r \sqrt{\tilde{\alpha}_r P_{tr}}s_{tn} \Big) + \sigma_{mn}.
		\end{aligned}
	\end{equation}
	Let $\alpha_{dt} \triangleq P_{tr}$ and $\alpha_{bt} \triangleq \tilde{\alpha}_r P_{tr}$ denote the average signal-to-noise ratios (SNRs) of the transmitter-to-receiver link and the transmitter-tag-receiver channel (i.e., backscatter link), respectively. 
	Hence, \eqref{eq:total_received_final} is rewritten as follows:
	\begin{equation}
		y_{mn} = \underbrace{f_{dm} \sqrt{\alpha_{dt}}s_{tn}}_{\text{direct link}} + \underbrace{f_{bm}  e \Big(g_r \sqrt{\alpha_{bt}}s_{tn} \Big)}_{\text{backscatter link}} + \sigma_{mn}.
	\end{equation}
	
	Since the receiver has $M$ antennas, the channel response vectors corresponding to the backscatter and the direct channels and are respectively given by
	\begin{equation}
		\begin{aligned}
			&\mathbf{f}_{b} = [f_{b1}, \ldots, f_{bm}, \ldots, f_{bM}]^\mathsf{T},\\
			&\mathbf{f}_{d} = [f_{d1}, \ldots, f_{dm}, \ldots, f_{dM}]^\mathsf{T}.
		\end{aligned}
	\end{equation}
	Then, the receiver's total received signals is expressed as
	\begin{equation}
		\begin{aligned}
			\mathbf{y}_{n} 
			&= \underbrace{\mathbf{f}_{d} \sqrt{\alpha_{dt}}s_{tn}}_{\text{direct link}} + \underbrace{\mathbf{f}_{b}  e \Big(g_r \sqrt{\alpha_{bt}}s_{tn} \Big)}_{\text{backscatter link}} + \bm{\sigma}_{n},
		\end{aligned}
	\end{equation}
	where $\bm{\sigma}_{n} = [\sigma_{1n}, \ldots, \sigma_{mn}, \ldots, \sigma_{Mn}]^\mathsf{T}$.
	
	This work {considers} that there are $I$ bits in each backscatter frame, i.e., 
	\begin{equation}
		\mathbf{x} = [x^{(1)},\ldots, x^{(i)}, \ldots, x^{(I)}].
	\end{equation}
	To enhance the receiver's decoding process, $F$ pilot bits are placed in each backscatter frame, so there are $I-F$ information bits in each backscatter frame. 
	We consider that the AmB tag and the receiver know these bits in advance and use them to estimate the AmB tag-receiver channel coefficients.
	More details about the use of pilots will be discussed in Section \ref{sec:deeplearning}. 
	We assume that during one AmB frame, the AmB tag-receiver channel is invariant~\cite{guo2018noncoherent}. 
	Since each backscatter bit is backscattered over $N$ symbols transmitted by the transmitter, we can express the receiver's received signals during the $i$-th AmB symbol duration by:
	\begin{equation}
		\mathbf{y}_{n}^{(i)} = \mathbf{f}_{d} \sqrt{\alpha_{dt}}s_{tn}^{(i)} + \mathbf{f}_{b}  e^{(i)} \Big(g_r \sqrt{\alpha_{bt}}s_{tn}^{(i)} \Big) + \bm{\sigma}_{n}^{(i)},
	\end{equation}
	where the backscatter state $e^{(i)}$ equals backscatter bit $x^{(i)}$, i.e., \mbox{$e^{(i)} = x^{(i)}, \forall i = 1,2, \ldots, I$}, and \mbox{$n = 1,2, \ldots, N$}.
	
	Note that for the AmB tag, the transmitter's signals appear as random signals and are unknown in advance. 
	Therefore, the AmB rate's {closed}-form (denoted $R_\text{AmB}$) cannot be obtained~\cite{huynh2018survey,guo2018noncoherent}. 
	For that, in Theorem~\ref{theo:maxrate}, we provide an alternative method to derive the AmB system's idealized throughput model when $N=1$.
	Let $\theta_0$ and $\theta_1$ denote the probability of backscattering bit 0 and bit 1, respectively. 		
		$\mathcal{V}$ denotes a realization of $\mathbf{y}$, and $p(\mathcal{V}|e)$ {denotes} the conditional probability density function.
		Theorem 1 is as follows.
	\begin{theorem}
		\label{theo:maxrate}		
		The maximum achievable rate of the AmB tag is given as
		\begin{equation}
			\label{eq.maxrate}
			\begin{aligned}
				R_\text{AmB}^* = Z(\theta_0) - \int_{\mathcal{V}}\big(\theta_1 p(\mathcal{V}|e=1) + \theta_0p(\mathcal{V}|e=0)  \big)Z(\mu_0)d\mathcal{V},
			\end{aligned}			
		\end{equation}		
		where \mbox{$\mu_0 = p(\mathcal{V}|e=0)$}, and $Z(\cdot)$ represents the binary entropy function.
	\end{theorem}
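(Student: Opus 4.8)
The plan is to recognize that, by Shannon's channel coding theorem, the maximum rate reliably achievable over this memoryless backscatter channel with a fixed input distribution is exactly the mutual information $I(e;\mathbf{y})$ between the transmitted backscatter bit $e$ and the received vector $\mathbf{y}$, here specialized to $N=1$ so that each received sample carries a single backscatter bit. I would therefore set $R_\text{AmB}^* = I(e;\mathbf{y})$ and expand it through the standard identity $I(e;\mathbf{y}) = H(e) - H(e\mid\mathbf{y})$, where $H(e)$ is the input entropy and $H(e\mid\mathbf{y})$ is the equivocation. The remaining task is then to evaluate these two terms separately and match them to the two terms in \eqref{eq.maxrate}.

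First I would handle the input entropy. Since $e$ is binary, taking value $0$ with probability $\theta_0$ and value $1$ with probability $\theta_1 = 1-\theta_0$, its entropy is precisely the binary entropy function evaluated at $\theta_0$, i.e.\ $H(e) = Z(\theta_0)$. This reproduces the first term of \eqref{eq.maxrate} with no further work.

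The more substantial step is computing the equivocation for the continuous-valued output. I would write $H(e\mid\mathbf{y}) = \int_{\mathcal{V}} p(\mathcal{V})\, H(e\mid\mathbf{y}=\mathcal{V})\, d\mathcal{V}$, and obtain the marginal output density by the law of total probability as $p(\mathcal{V}) = \theta_1 p(\mathcal{V}\mid e=1) + \theta_0 p(\mathcal{V}\mid e=0)$, which is exactly the weight multiplying $Z(\mu_0)$ inside the integral of \eqref{eq.maxrate}. For each fixed realization $\mathcal{V}$, the input is binary, so the per-symbol equivocation is the binary entropy of the posterior, i.e.\ $H(e\mid\mathbf{y}=\mathcal{V}) = Z(\mu_0)$, where by Bayes' rule $\mu_0$ is the posterior probability of $e=0$, namely $\theta_0 p(\mathcal{V}\mid e=0)/p(\mathcal{V})$. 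Substituting $H(e)$ and $H(e\mid\mathbf{y})$ back into the mutual information identity then yields \eqref{eq.maxrate}.

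The main obstacle I anticipate is the treatment of the continuous output $\mathbf{y}$ and the precise reading of the inner argument $\mu_0$: it must be the Bayes posterior of $e=0$ given $\mathcal{V}$, induced jointly by the conditional densities and the prior $(\theta_0,\theta_1)$, rather than a raw likelihood. A related delicate point is that, because the transmitter symbol $s_{tn}\sim\mathcal{CN}(0,1)$ is random and unknown to the tag, the conditional likelihoods $p(\mathcal{V}\mid e=0)$ and $p(\mathcal{V}\mid e=1)$ are themselves zero-mean complex Gaussians whose covariances differ according to whether the backscatter path is active; one must keep these as the correct conditional densities rather than deterministic-signal likelihoods. I would treat $p(\mathcal{V}\mid e)$ as given conditional densities throughout, so the derivation proceeds without evaluating any Gaussian integrals explicitly, and the only genuinely careful bookkeeping is ensuring consistency between $\mu_0$, the marginal $p(\mathcal{V})$, and the prior.
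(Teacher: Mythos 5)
Your proposal is correct and follows essentially the same route as the paper's own proof in Appendix~A: identifying $R_\text{AmB}^*$ with the mutual information $\mathcal{M}(e;\mathbf{y})$, writing it as $Z(\theta_0)$ minus the expected conditional entropy, and expressing the latter as the binary entropy of the Bayes posterior integrated against the mixture density $\theta_1 p(\mathcal{V}\mid e=1)+\theta_0 p(\mathcal{V}\mid e=0)$. Your careful reading of $\mu_0$ as the posterior $p(e=0\mid\mathcal{V})$ rather than the raw likelihood is in fact what the paper's proof uses (it sets $\mu_j = p(e=j\mid\mathcal{V})$), so you correctly resolved what is a typo in the theorem statement itself.
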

	\begin{proof}
		The proof of Theorem~\ref{theo:maxrate} is provided in Appendix~\ref{appendix:maximumbackscatterrate}.
	\end{proof}
	
	It can be observed from~\eqref{eq.maxrate} that the maximum achievable rate $R_\text{AmB}^*$ depends on the probability of backscattering bit 0 $\theta_0$.
	We will explore this dependency through our simulations in Section~\ref{sec:evaluation}.
	Let $\mathbf{Y}^{(i)} \!=\! [\mathbf{y}_1^{(i)}, \ldots, \mathbf{y}_N^{(i)}]^\mathsf{T}$ denote the sequence of received signal corresponding to the period of the $i$-th AmB symbol. 
	We describe our proposed AmB detectors that use MLK and DL to recover the original bits $x^{(i)}$ from these sequences of received signals in the following sections.
	
	\section{AmB Signal Detector based on Maximum Likelihood}
	\label{sec:decoding}
	Recall that signals backscattered by the AmB tag can be regarded as the active signals' background noise, making them very challenging to be detected~\cite{van2018reinforcement, zhang2016secure}.
	This section presents the AmB signal decoding based on MLK. 
	Note the MLK-based detector can be considered as an optimal signal detector, so it {can provide} the system's upper-bound {performance for comparison purpose}.
	
	\subsection{Received Signals' Likelihood Functions}
	
	The aim of MLK-based detector is to derive the received signals' likelihood functions. 
	In particular, if the AmB tag transmits bits ``0'' (corresponding to \mbox{$e^{(i)} = 0$}), the receiver only receives signals solely from the transmitter-to-receiver channel.
	Whereas, if the AmB tag transmits bits ``1'' (corresponding to \mbox{$e^{(i)} = 1$}), the receiver will receive signals from both direct and AmB channels. 
	Consequently, the channel statistical covariance matrices for these scenarios are expressed as ~\cite{guo2018noncoherent},~\cite{zhang2018constellation}
	
	\begin{equation}
		\begin{aligned}
			\mathbf{R}_1 &= (\mathbf{k}_1+\mathbf{k}_2)(\mathbf{k}_1+\mathbf{k}_2)^\text{H}+ \mathbf{I}_M,~\text{if $e^{(i)} = 1$},\\
			\mathbf{R}_0 &= \mathbf{k}_1 \mathbf{k}_1^\text{H} + \mathbf{I}_M,~\text{if $e^{(i)} = 0$},
		\end{aligned}
	\end{equation} 
	where $\mathbf{k}_1 = \mathbf{f}_d \sqrt{\alpha_{dt}}$, $\mathbf{k}_2 =  g_r \mathbf{f}_b \sqrt{\alpha_{bt}}$, $(\cdot)^\text{H}$ is the conjugate transpose operator, and $\mathbf{I}_M$ is the identity matrix with size \mbox{$M \times M$}. 
	It is important to note that the estimation of CSI can be performed by various techniques which are well studied in the literature~\cite{ma2018blind, zhao2019channel}. 
	Similar to~\cite{guo2018noncoherent}, the noise and the transmitter's RF signals are assumed to follow the CSCG distribution\footnote{Note that this information is not required by our proposed DL-based AmB signal detector.}.
	As a {result}, $\mathbf{y}_n^{(i)}$ also follows the CSCG~\cite{guo2018exploiting}, i.e., 	
	\begin{equation}
		\mathbf{y}_{n}^{(i)} \sim \left\{
		\begin{array}{ll}						
			\mathcal{CN}\left(0, \mathbf{R}_0\right), &\mbox{if $e^{(i)}=0$},\\
			\mathcal{CN}\left(0, \mathbf{R}_1\right), &\mbox{if $e^{(i)}=1$}.												
		\end{array}	\right.
	\end{equation}
	Now, the conditional probability density function (PDFs) of the received signals $\mathbf{y}_n^{(i)}$ given the backscatter state $e^{(i)}$ is computed by
	\begin{equation}
		\label{eq:pdf}
		\begin{aligned}
			p(\mathbf{y}_n^{(i)}|e^{(i)} = 0) &= \frac{1}{\pi^M |\mathbf{R}_0|} e^{-{\mathbf{y}_n^{(i)}}^\text{H} \mathbf{R}_0^{-1}\mathbf{y}_n^{(i)}}, \\
			p(\mathbf{y}_n^{(i)}|e^{(i)} = 1) &= \frac{1}{\pi^M |\mathbf{R}_1|} e^{-{\mathbf{y}_n^{(i)}}^\text{H} \mathbf{R}_1^{-1}\mathbf{y}_n^{(i)}},
		\end{aligned}
	\end{equation}
	where $|\cdot|$ and $(\cdot)^{-1}$ represent the determinant and inverse of a matrix, respectively.
	From~(\ref{eq:pdf}), we now can express the likelihood functions $\mathcal{H}(\cdot)$ for the sequence of received signals as follows~\cite{guo2018noncoherent, van2022defeating}:
	\begin{equation}
		\label{eq:likelihoodfunctions}
		\begin{aligned}
			\mathcal{H}(\mathbf{Y}^{(i)}|e^{(i)} = 1) &= \prod_{n=1}^{N} \frac{1}{\pi^M |\mathbf{R}_1|} e^{-{\mathbf{y}_n^{(i)}}^\text{H} \mathbf{R}_1^{-1}\mathbf{y}_n^{(i)}},\\
			\mathcal{H}(\mathbf{Y}^{(i)}|e^{(i)} = 0) &= \prod_{n=1}^{N} \frac{1}{\pi^M |\mathbf{R}_0|} e^{-{\mathbf{y}_n^{(i)}}^\text{H} \mathbf{R}_0^{-1}\mathbf{y}_n^{(i)}}.
		\end{aligned}
	\end{equation}
	
	\subsection{Maximum Likelihood Detector}
	Let $\tilde{e}^{(i)}$ be the estimation of $e^{(i)}$.
	From (\ref{eq:likelihoodfunctions}), the MLK hypothesis in~\eqref{eq:likelihood_criterion} can be used to obtain the backscattered state $e^{(i)}$~\cite{guo2018noncoherent},~\cite{zhang2018constellation}.
	\begin{equation}
		\label{eq:likelihood_criterion}
		\tilde{e}^{(i)} 	=	\left\{	\begin{array}{ll}
			1,	&	\mathcal{H}(\mathbf{Y}^{(i)}|e^{(i)} = 0) < \mathcal{H}(\mathbf{Y}^{(i)}|e^{(i)} = 1),\\
			0,	&	\mathcal{H}(\mathbf{Y}^{(i)}|e^{(i)} = 0) > \mathcal{H}(\mathbf{Y}^{(i)}|e^{(i)} = 1).			
		\end{array}	\right.
	\end{equation}
	Based on~\eqref{eq:likelihoodfunctions} and \eqref{eq:likelihood_criterion}, the MLK hypothesis can be given by
	\begin{equation}
		\label{eq:likelihood_criterion_2}
		\tilde{e}^{(i)} 	\!=\!	\left\{	\begin{array}{ll}
			\!\!\!1,	\prod_{n=1}^{N} \!\frac{1}{\pi^M |\mathbf{R}_0|} e^{-{\mathbf{y}_n^{(i)}}^\text{H} \mathbf{R}_0^{-1}\mathbf{y}_n^{(i)}} \!\!<\!\! \prod_{n=1}^{N} \!\frac{1}{\pi^M |\mathbf{R}_1|} e^{-{\mathbf{y}_n^{(i)}}^\text{H} \mathbf{R}_1^{-1}\mathbf{y}_n^{(i)}},\\
			\!\!\!0,	\prod_{n=1}^{N} \!\frac{1}{\pi^M |\mathbf{R}_0|} e^{-{\mathbf{y}_n^{(i)}}^\text{H} \mathbf{R}_0^{-1}\mathbf{y}_n^{(i)}} \!\!>\!\! \prod_{n=1}^{N} \!\frac{1}{\pi^M |\mathbf{R}_1|} e^{-{\mathbf{y}_n^{(i)}}^\text{H} \mathbf{R}_1^{-1}\mathbf{y}_n^{(i)}}.
			
		\end{array}	\right.
	\end{equation}	
	By applying logarithmic operations to \eqref{eq:likelihood_criterion_2}, we can derive the following expression
	\begin{equation}
		\tilde{e}^{(i)}	=	\left\{	\begin{array}{ll}
			1,	&	\sum_{n=1}^{N} {\mathbf{y}_n^{(i)}}^\text{H} (\mathbf{R}_0^{-1}-\mathbf{R}_1^{-1})\mathbf{y}_n^{(i)} > N\ln\frac{|\mathbf{R}_1|}{|\mathbf{R}_0|},\\
			0,	&	\sum_{n=1}^{N} {\mathbf{y}_n^{(i)}}^\text{H} (\mathbf{R}_0^{-1}-\mathbf{R}_1^{-1})\mathbf{y}_n^{(i)} < N\ln\frac{|\mathbf{R}_1|}{|\mathbf{R}_0|}.
		\end{array}	\right.
	\end{equation}
	Then, the backscattered bit $x^{(i)}$ {can be} derived based on $\tilde{e}^{(i)}$. 
	However, for that, MLK requires perfect CSI for both conventional and AmB transmissions, making its less efficient in practice, especially for AmB communications where backscattered signals are very weak~\cite{guo2018noncoherent}.
	
	\section{Deep Learning-Based Signal Detector}
	\label{sec:deeplearning}
	As discussed in Section~\ref{sec:decoding}, the traditional approaches for backscattered signal detection (i.e., MLK-based detectors) require accurate channel statistical covariance matrices. 
	However, the uncertainty of the wireless environment makes it difficult for the system to estimate these matrices accurately. 
	To that end, we develop a DL-based signal detector that can overcome this challenge.
	We first discuss the data preprocessing procedure, aiming to construct a dataset for the training phase of the DL detector, and then our proposed {DNN} architecture is {presented} in detail.
	
	\subsection{Data Preprocessing} 
	\label{subsec:data_procesing}
	In practice, the raw data (e.g., received signals) may not be qualified to be used directly to train a DL model {as it can lead to a poor or even useless trained model}.
	Therefore, data preprocessing is an essential step before training.
	In this work, the data preprocessing steps are as follows.
	Recall that $\mathbf{y}^{(i)}_n$ denotes the received AmB signals, and $\mathbf{Y}^{(i)}$ is the sequence of $\mathbf{y}^{(i)}_n$ corresponding to the $i$-th AmB symbol period. 
	Instead of directly using $\mathbf{Y}^{(i)}$ for model training, the sample covariance matrix ${\mathbf{S}^{(i)}}$ of the received signals, defined by~\eqref{eq:sampleCV}, can be used as the training data for DL models.
	\begin{equation}
		\label{eq:sampleCV}
		{\mathbf{S}}^{(i)} = \frac{1}{N} \sum_{n=1}^{N}\mathbf{y}^{(i)}_n{\mathbf{y}^{(i)}_n}^\mathrm{H}.
	\end{equation}
	The main reason is that the sample covariance matrix can capture the relationship between received signals corresponding to an AmB symbol period
	, thus providing more insights on received signals than {the signals} themselves.
	In addition, the performance of a signal detector is proportional to the number of RF symbols corresponding to an AmB symbol, i.e., $N$.
	In other words, the higher value $N$ is, the better performance the detector can achieve.
	Typically, $N$ is often set to a higher value than the number of antennas~\cite{guo2018exploiting, guo2018noncoherent, van2022defeating}, and thus the size of ${\mathbf{S}^{(i)}}$ is {smaller} than that of $\mathbf{Y}^{(i)}$.
	Therefore, training the DNN model with the sample covariance matrix ${\mathbf{S}^{(i)}}$ can not only preserve all crucial information of the received signals but also reduce the training data size without affecting the learning efficiency~\cite{van2022defeating, guo2018noncoherent}. 
	\begin{figure*}[t]
		\centering
		\includegraphics[width=0.7\linewidth]{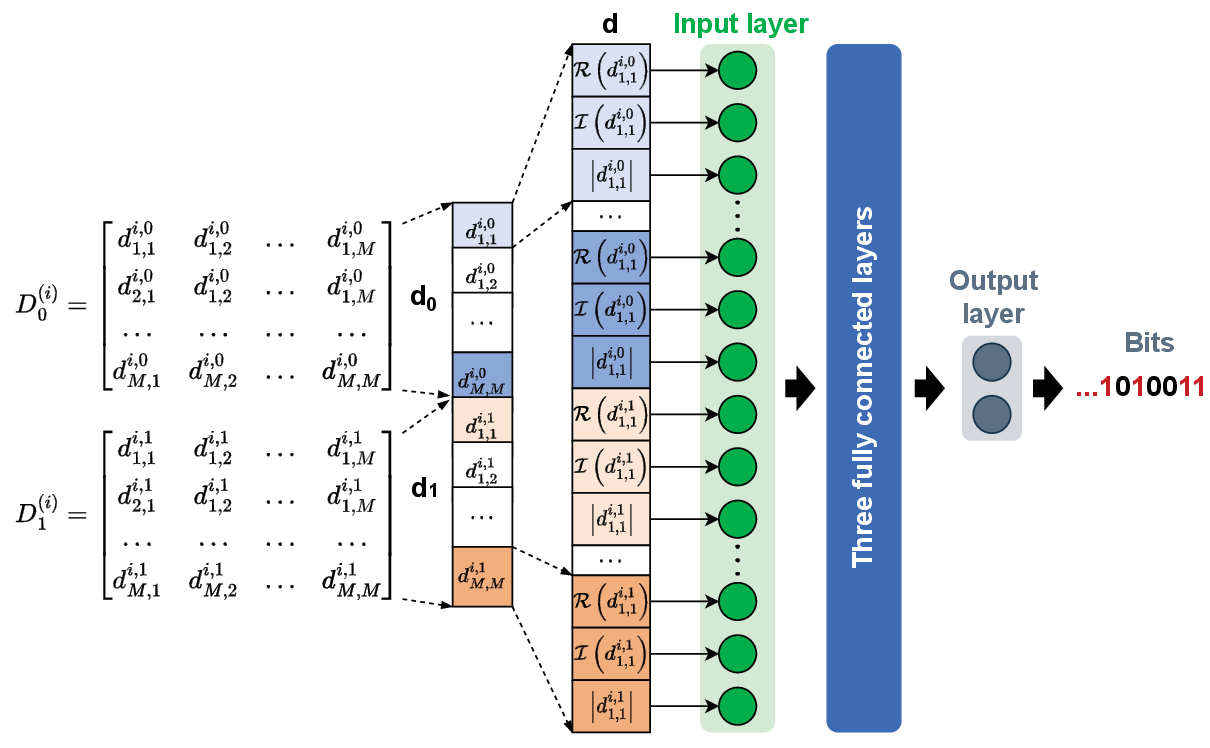}
		\caption{Our proposed AmB signal detector based on DL.}
		\label{fig:DL_detector}
		\vspace{-20pt}
	\end{figure*}
	Note that conventional MLK-based detectors explicitly require accurate estimation of channel coefficients (i.e., $\mathbf{k_1}$ and $\mathbf{k_2}$), resulting in high computational complexity, especially for a system with multiple antennas~\cite{shen2006channel}.
	To alleviate this challenge, pilot bits are leveraged to indirectly capture the channel coefficients, thus improving the accuracy of the covariance matrix estimation.
	Specifically, we use a sequence of $F$ pilot bits where the first $F/2$ bits are ``0'', and the rest are ``1''.
	The detector knows these bits in advance so that the channel coefficients can be implicitly obtained by 
	inspecting the received signals corresponding to them. 
	In this work, the averages of ${\mathbf{S}}^{(i)}$ of received signals corresponding to pilot bits ``1'' and ``0'' are leveraged to indirectly capture the channel coefficients, which are given as follows:
	\begin{equation}
		\begin{aligned}
			\bar{\mathbf{R}}_1 &= \frac{2}{FN} \sum_{i=1}^{F/2} \sum_{n=1}^{N} \mathbf{y}^{(i)}_n{\mathbf{y}^{(i)}_n}^\mathrm{H}, \quad \textrm{when}~e^{(i)}=1,\\
			\bar{\mathbf{R}}_0 &= \frac{2}{FN} \sum_{i=1}^{F/2} \sum_{n=1}^{N} \mathbf{y}^{(i)}_n{\mathbf{y}^{(i)}_n}^\mathrm{H}, \quad \textrm{when}~e^{(i)}=0.				
		\end{aligned}		
	\end{equation}
	
	In this way, the detector can use $\bar{\mathbf{R}}_0$ and $\bar{\mathbf{R}}_1$ to effectively detect AmB signals without requiring channel coefficients explicitly.
	In particular, if the AmB tag backscatters bit ``0'', the sample covariance matrix $\mathbf{S}^{(i)}$ would be similar to $\bar{\mathbf{R}}_0$ but different from $\bar{\mathbf{R}}_1$.
	In contrast, if bit ``1'' is backscattered, the sample covariance matrix $\mathbf{S}^{(i)}$ would be similar to $\bar{\mathbf{R}}_1$ but different from $\bar{\mathbf{R}}_0$.	
	It is worth noting that these similarities and differences between $\mathbf{S}^{(i)}$ and the averaged covariance matrices (i.e., $\bar{\mathbf{R}}_0$ and $\bar{\mathbf{R}}_1$) can be better represented by multiplying $\mathbf{S}^{(i)}$ with the inverse of these matrices.	
	As such, the data for training our DL model is given as follows:
	\begin{equation}
		\begin{aligned}
			\mathbf{D}^{(i)}_1 &= \mathbf{S}^{(i)}{\bar{\mathbf{R}}_1}^{-1},\\			
			\mathbf{D}^{(i)}_0 &= \mathbf{S}^{(i)}{\bar{\mathbf{R}}_0}^{-1}.
		\end{aligned}		
	\end{equation}
	By doing so, the proposed AmB signal detector based on DL can obtain adequate information about the channel coefficients via $\mathbf{S}^{(i)}$, $\bar{\mathbf{R}}_0$, and $\bar{\mathbf{R}}_1$ to learn and detect backscattered bits.
	
	To create input data for the DL model, the matrices $\bar{\mathbf{R}}_0$ and $\bar{\mathbf{R}}_1$ are processed as follows.
	Recall that the receiver has $M$ antennas, $\mathbf{D}^{(i)}_0$ and $\mathbf{D}^{(i)}_1$ are $M\!\times\!M$ matrices.	
	Let $d^0_{ij}$ and $d^1_{ij}$ {be} the elements of $\mathbf{D}^{(i)}_0$ and $\mathbf{D}^{(i)}_1$, respectively, with \mbox{$i,j \in \{1,2, \dots,M\}$}.
	First, the matrices $\bar{\mathbf{R}}_0$ and $\bar{\mathbf{R}}_1$ are flattened into two vectors, i.e., $\mathbf{d}_0$ and $\mathbf{d}_1$, each has $M\!\times\!M$ elements, as shown in Fig.~\ref{fig:DL_detector}.
	Then, $\mathbf{d}_1$ is concatenated at the end of $\mathbf{d}_0$ to form vector $\mathbf{d}$ with $2M^2$ elements.
	Finally, the input data for the DL model is constructed by taking (i) the real part, (ii) the imaging part, and (iii) the absolute of each element of the combined vector $\mathbf{d}$, as illustrated in Fig.~\ref{fig:DL_detector}.
	{Thus}, the input data can allow the DL model to effectively leverage all important information of the received signals, thereby improving learning quality.
	
	\subsection{Deep Neural Network Architecture}
	\label{subsec:DL}
	This paper designs a DNN in the signal detector to decide whether the signals received by the receiver correspond to bit ``1'' or bit ``0'' of a AmB frame.
	Aiming to develop a lightweight anti-eavesdropping solution, we only consider a small footprint DNN with a simple and {typical} architecture. 
	In particular, our DNN consists of an input layer, {a small number (e.g., three)} of fully connected (FC) hidden layers, and an output layer, as illustrated in Fig.~\ref{fig:DL_detector}. 
	Since the input vector data's size is \mbox{$M\!\times\!M\!\times\!2\!\times3$}, the input layer consists of \mbox{$M\!\times\!M\!\times\!2\!\times3$} neurons.	
	After the input layer receives the input data, it will be sequentially forwarded to FC layers, and each consists of multiple neurons. 
	In FC layers, each neuron connects to all neurons in the previous layer.
	For example, each neuron in the first FC layer connects to all neurons in the input layer.	
	The output of the last FC layer is applied to the soft-max function at the output layer to obtain the category probabilities $p$ of which the received signals corresponds to bit ``0'' and bit ``1''. 
	Finally, the output layer uses these probabilities to determine the input's class, i.e., ``0'' or ``1''.
	In DNNs, a neuron applies a mathematical function, such as a sigmoid, tanh, or rectified linear unit (ReLU), to the sum of its input and bias to get its output.
	Here, the connections between neurons are represented by weight values. 
	The input to a neuron {is} calculated as the weighted sum of the outputs of neurons in the previous layer that are connected to it.
	The training process aims to optimize the DNN parameters $\theta$ (i.e., weights and bias) to minimize the loss $\mathcal{L}(f(\mathbf{d};\theta), \psi)$, which is the error between the DNN output $f(\mathbf{d};\theta)$ and the ground truth $\psi$, as follows:
	\begin{equation}
		\label{eq.loss}
		\min_\theta ~~\mathbb{E} \big[\mathcal{L}(f(\mathbf{d};\theta), \psi)\big],
	\end{equation}
	{here} the cross-entropy loss is used at the output layer.
	
	In DL, Stochastic Gradient Descent (SGD) is widely used to optimize the DNN parameters due to the simplicity in implementation while still achieving good performance~\cite{goodfellowdeep2016}.
	Specifically, at learning iteration $t$, a minibatch $\mathbf{B}_t$ is sampled from the training dataset.
	Then, the cost function is computed as follows:
	\begin{equation}
		J_t = \frac{1}{|\mathbf{B}_t|} \sum_{\mathbf{d}, \psi \in \mathbf{B}_t}\mathcal{L}\big(f(\mathbf{d};\theta_t), \psi \big).
	\end{equation}
	After that, the DNN's parameters are updated by
	\begin{equation} 	
		\theta_{t+1} \leftarrow \theta_t - \epsilon_t \nabla_{\theta_t} J_t(\theta_t),
	\end{equation}
	where $\epsilon_t$ is the step size controlling how much the parameters are updated, and $\nabla_{\theta_t}(\cdot)$ is the gradient operator with respect to the parameters $\theta$.	
	{Thanks to} this design, the proposed DL based-receiver can accurately predict transmitted bits in a backscatter frame even for weak AmB signals. 
	
	Although DL allows receivers to effectively detect AmB signals without requiring accurate {CSI}, it faces several shortcomings.
	First, the training process of DL often requires a huge amount of {received signals for the training process} to achieve good performance~\cite{nichol2018first}.	
	Thus, this makes DL less efficient in practice, especially when data is expensive and contains noise due to the environment's dynamic and uncertainty, as in the considered wireless environment.
	Second, conventional DL may face the over-fitting problem, i.e., performing excellently on training datasets but very poorly on test datasets, if the training dataset does not {contain} enough samples to represent all possible scenarios. 	
	Due to the {dynamic} nature of the wireless environment, the channel conditions may vary significantly over time.
	For example, {a moving bus} may change channel conditions from LoS to NLoS and vice versa.
	Consequently, real-time data may greatly differ from training data, making these above problems more {severe}.
	Third, wireless channel conditions can also be very different at different areas due to their landscapes, so the DL model trained at one area may not perform well at other areas.
	As such, different sites may need training different models from scratch, which is time-consuming and costly~\cite{hospedales2021meta}.	
	For that, the next section will discuss our approach based on meta-learning that can effectively alleviate these challenges.  
	
	\section{AmB Signal Detector based on Meta-Learning}
	Recently, meta-learning is emerging as a promising approach to address the shortcomings of DL discussed in the previous section. 
	The reason is that meta-learning has an ability of learning to learn, i.e., self-improving the learning algorithm~\cite{hospedales2021meta}.
	{The main idea of meta-learning is to train the model on a collection of tasks, enabling it to acquire generalization capabilities. 
		As a result, the trained model can quickly perform well in a new task only after a few update iterations, even when provided with a small dataset specific to the new task~\cite{finn2017model}.}
	The meta-learning algorithms can be classified into two groups~\cite{nichol2018first}. 
	The first group aims to {encode} knowledge in special and complex DNN architectures, such as Convolutional Siamese Neural Network~\cite{koch2015siamese}, Matching Networks~\cite{vinyals2016matching}, Prototypical networks~\cite{snell2017prototypical}, and Memory-Augmented Neural Networks~\cite{santoro2016meta}.
	As such, this group {requires} more overhead for operations~\cite{finn2017model}, and thus it may not be suitable for our lightweight framework.	
	In the second group, learning algorithms aim to learn the model's initialization.
	Thus, this approach does not require any additional elements or a particular architecture, making it well fit in our framework.
	Therefore, the model initialization is {proposed to use in} our framework. 
	
	A classical approach of the second group (i.e., learning the model's initialization) is that the model is first trained with a large dataset from existing tasks and then fine-tuned with the new task's data.
	However, this pre-training method could not guarantee {that} the learned initialization is good for fine-tuning, and many other techniques need to be leveraged to achieve good performance~\cite{nichol2018first}.
	Recently, the work in~\cite{finn2017model} proposes Model-Agnostic Meta-Learning (MAML) that directly optimizes model's parameters during the learning process with a collection of similar tasks, thus guaranteeing a good initialization of the model. 
	However, MAML needs to perform second-order derivatives through the optimization process, {hence requires high complexity.}
	To overcome this problem, the Reptile algorithm is proposed in~\cite{nichol2018first}.
	This algorithm only uses a conventional DL algorithm, such as stochastic gradient descent (SGD) and Adam, making it {less computationally demanding while still maintaining a similar performance level of MAML~\cite{nichol2018first}.}
	For that reason, this work adopts the Reptile algorithm. 
	\begin{algorithm}[t]
		\caption{The Meta-learning based AmB Signal Detection}
		\label{alg:meta}
		\begin{algorithmic}[1]
			\STATE Initialize model $f$ with parameters $\theta$. 		
			\FOR{\textit{episode = 1, 2, \dots}}
			\STATE Sample task $\tau$ uniformly, and obtain the corresponding training dataset $\mathcal{D}_\tau$
			\FOR{step = 1 to $p$}
			\STATE Based on dataset $\mathcal{D}_\tau$, perform a step of stochastic optimization (SGD or Adam), starting with parameters $\theta$, resulting with parameters \mbox{$\bar{\theta}=U^p_\tau(\theta)$}.				
			\ENDFOR
			\STATE Update $\theta \leftarrow \theta + \eta (\bar{\theta}-\theta)$
			\ENDFOR
		\end{algorithmic}
	\end{algorithm}
	The detail of the meta-learning-based AmB signal detection is provided in Algorithm~\ref{alg:meta}.
	In particular, a model $f$ initiated with parameters $\theta$ is trained with a set $\mathcal{T}$ of similar tasks~\cite{finn2017model, nichol2018first}.
	{As an example, let us consider a set of image classification tasks. 
		The first task involves classifying images into different animal categories, specifically the tiger, dog, and cat classes. 
		On the other hand, the second task focuses on classifying images into another set of animal categories, namely the elephant, bear, and lion classes.}	
	Similarly, this work considers a set of AmB signal detection tasks, each under a particular channel model, e.g., Rayleigh, WINNER II, and Rician.
	
	Generally, meta-learning comprises two nested loops. 
	In the inner loop, a base-learning algorithm (e.g., SGD or Adam) solves task $\tau$, e.g., detecting the AmB signal under the Rician channel.
	The objective of the inner loop's learning is to minimize the loss $\mathcal{L}_\tau$. 	
	Note that a base-learning algorithm can be a typical DL algorithm, such as stochastic gradient descent (SGD).
	Thus, the inner loop is similar to the training process of the proposed DL-based AmB signal detector discussed in Section~\ref{subsec:DL}.
	After $p$ steps of learning, the resulting parameters is denoted by \mbox{$\bar{\theta}=U^p_\tau(\theta)$}, where $U^p_\tau(\cdot)$ denotes the update operator.		
	Then, at the outer loop, a meta-learning algorithm (e.g., Reptile) updates the model parameters as follows:
	\begin{equation}
		\label{eq:outer_update}
		\theta \leftarrow \theta + \eta (\bar{\theta}-\theta),
	\end{equation}
	where $\eta$ is the outer step size controlling how much the model parameters are updated.
	By doing so, the generalization in learning is improved.
	Note that if $p=1$, i.e., performing a single step of gradient descent in the inner loop, Algorithm~\ref{alg:meta} becomes a joint training on the mixture of all tasks, which may not learn a good initialization for meta-learning~\cite{nichol2018first}.
	
	\section{Simulation Results}
	\label{sec:evaluation}
	\subsection{Simulation Settings}
	To evaluate our proposed solution, we conduct simulations in various scenarios {under} the following settings unless otherwise stated.
	For the transmission aspect, the AmB rate is $50$ times lower than the transmitter-to-receiver link rate, i.e., \mbox{$N=50$}, and the AmB message consists of \mbox{$I=100$} bits.	
	Since the transmitter-receiver link SNR $\alpha_{dt}$ significantly depends on many factors, such as transmit power, antenna gain, and path loss, we vary it from $1$~dB to $9$~dB in the simulations to examine our proposed solution in different scenarios.	
	The tag-receiver link SNR $\alpha_{bt}$ is often quite weak, and thus it is set to $-10$~dB~\cite{zhang2018constellation}.	
	
	In this work, the system is simulated using Python, and the DL model is built using the Pytorch library.
	Specifically, the Rayleigh fading follows the zero-mean and unit-variance CSCG distribution, as suggested in~\cite {zhang2018constellation}.
	Note that in our system, the AmB tag does not know the transmitter's signals (i.e., RF resource signals) sending from the transmitter to the receiver. 
	In addition, this work focuses on signal detection for the AmB link.
	As such, the RF resource signals can be assumed to follow the zero-mean and unit-variance CSCG distribution, similar to those in~\cite{zhang2018constellation, guo2018noncoherent, guo2018exploiting}.
	For evaluating different aspects of our proposed solution, we use three metrics, i.e., maximum achievable AmB rate and BER for the transmission performance and guessing entropy for the security analysis.
	
	\subsection{Maximum Achievable Backscatter Rate}
	\begin{figure}
		$\begin{array}{cc}
			\centering
			\includegraphics[width=0.48\linewidth]{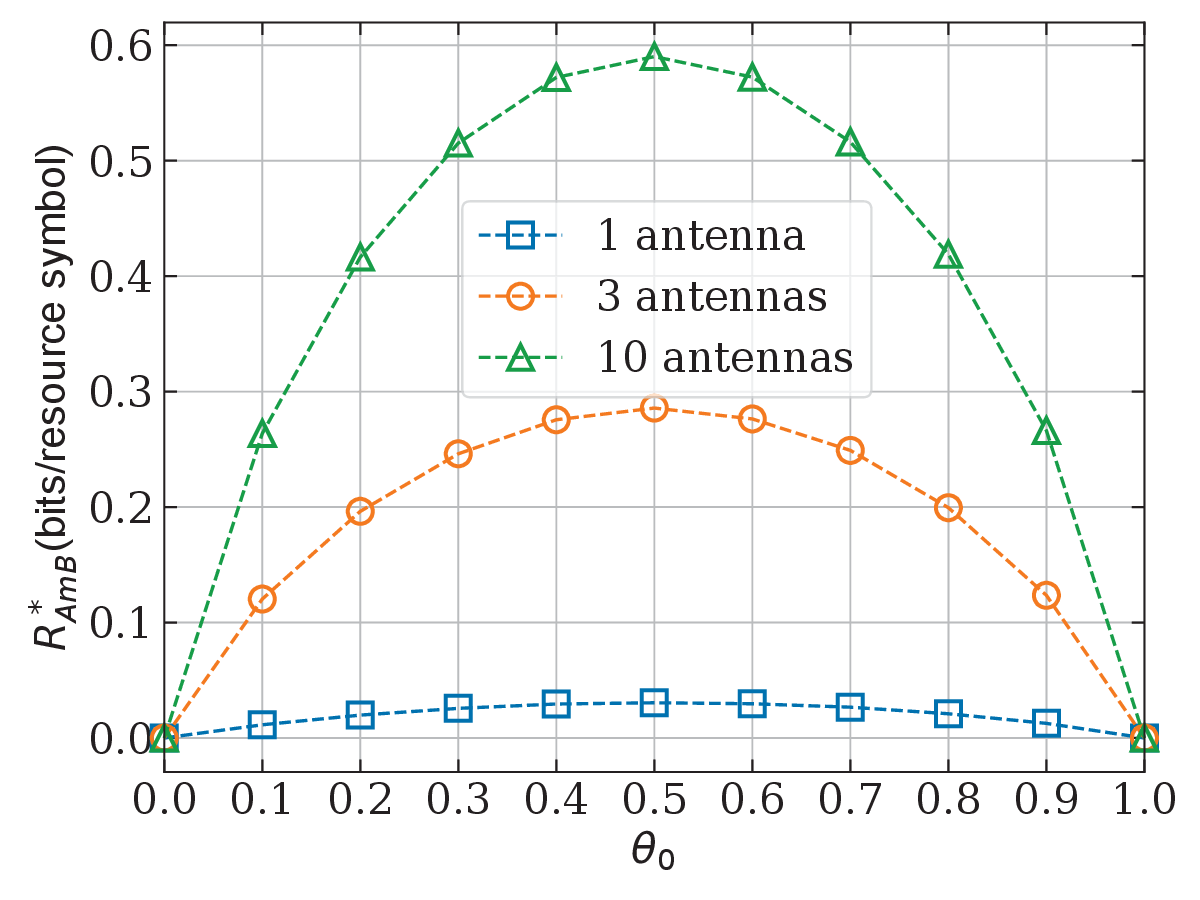} 
			&\includegraphics[width=0.48\linewidth]{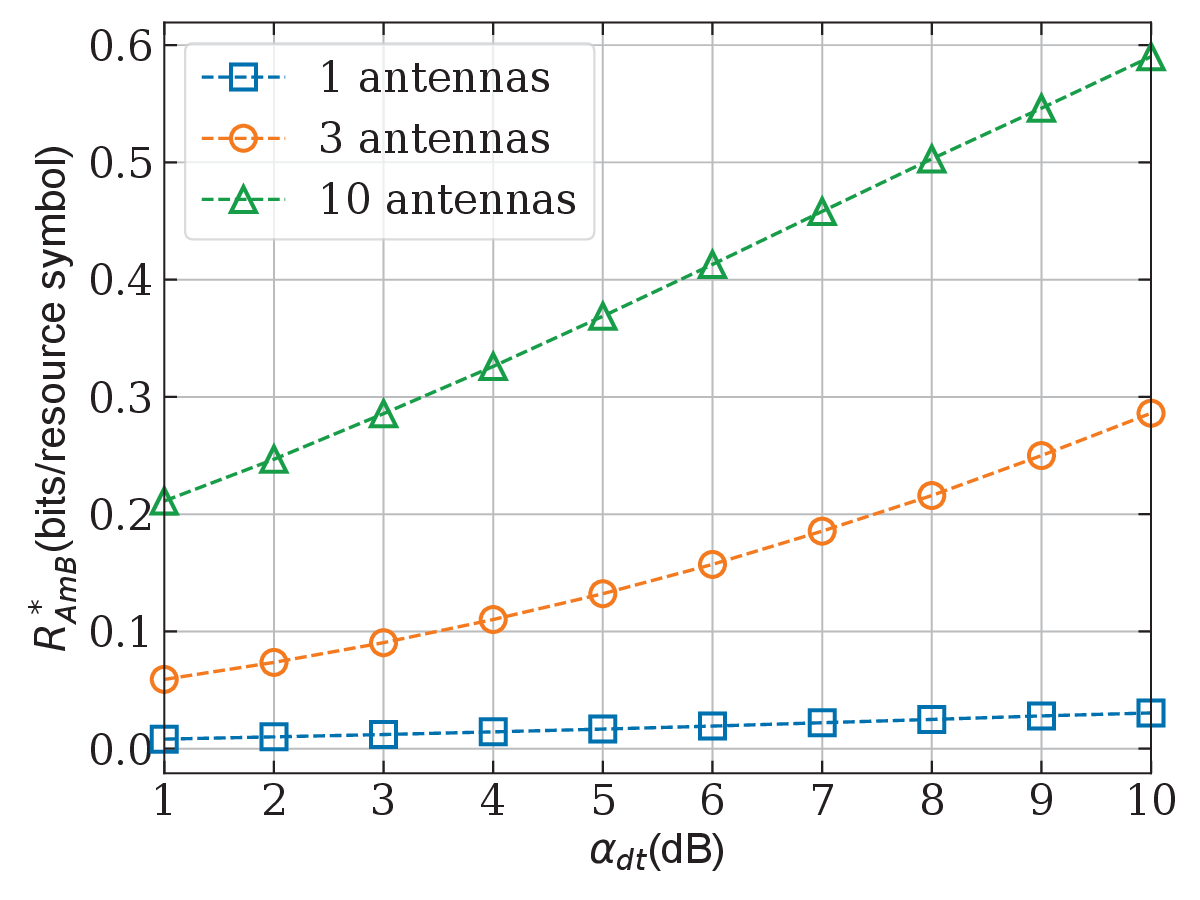} \\
			\text{(a)} & \text{(b)}
			\vspace{-10pt}
		\end{array}$
		\caption{The maximum achievable AmB rate when varying (a) $\theta_0$ and (b) the transmitter-to-receiver link SNR, i.e., $\alpha_{dt}$.}
		\label{fig:MaximumRate_vary_theta0_SNR}
		\vspace{-10pt}
	\end{figure}
%
	To get insights on the maximum achievable AmB rate $R_{AmB}^*$ of the proposed system, we perform simulations in three settings, where the receiver has $1, 3$ and $10$ antennas, as shown in Fig.~\ref{fig:MaximumRate_vary_theta0_SNR}.
	First, in Fig.~\ref{fig:MaximumRate_vary_theta0_SNR}(a), the prior probability of backscattering bit ``0'', i.e., $\theta_0$, is varied to observe the maximum achievable AmB rate, i.e., the idealized throughput $R_\text{AmB}^*$ which is given by Theorem~\ref{theo:maxrate}.
	The results are obtained by $10^6$ Monte Carlo runs. 
	Fig.~\ref{fig:MaximumRate_vary_theta0_SNR}(a) shows that the maximum achievable AmB rate $R_{AmB}^*$ {increases} when the receiver has more antennas.
	This is because the receiver can achieve a higher gain with more antennas. 
	As such, the received backscatter signals are enhanced, thereby reducing the impacts from fading and interference originating from the direct link and the surrounding environment.
	Importantly, $R_{AmB}^*$ is maximized at \mbox{$\theta_0=0.5$}.
	Therefore, we set \mbox{$\theta_0=0.5$} in all {the rest of} simulations.
	Next, we vary the transmitter-to-receiver link SNR, i.e., $\alpha_{dt}$, from $1$~dB to $9$~dB, as shown in Fig.~\ref{fig:MaximumRate_vary_theta0_SNR}(b).
	It can be observed that as the active link SNR $\alpha_{dt}$ increases, the maximum achievable backscatter rate $R_{AmB}^*$ increases.
	The reason is that when the transmitter increases its transmit power, the signal arriving at the AmB tag is stronger, making the backscattered signals stronger.

	\subsection{Anti-eavesdropper and Security Analysis}
	{It is worth noting that our proposed framework can counter eavesdropping attacks in two folds.
		Firstly, a part of the original message is hidden in the AmB channel, appearing as background noise for conventional signal detectors.
		Secondly, the proposed message encoding mechanism makes reconstructing the original message non-trivial unless the eavesdropper knows this and at the same time it can decode information from both active and AmB channels.
		
		In particular, without knowledge about the system in advance (i.e., the settings of AmB transmission), the eavesdropper is not even aware of the existence of the AmB message, thus guaranteeing the confidentiality of the original message.
		It is worth emphasizing that even if the eavesdropper is aware of the AmB message, it is still challenging to capture the AmB message.
		Suppose that the eavesdropper leverages the AmB signal detector circuit proposed in~\cite{liu2013ambient}, which requires different values of resistors and capacitors in the circuit for different backscatter rates.
		As such, if the eavesdropper deploys the AmB circuit but does not know the exact backscatter rate, it still cannot decode the backscatter signals~\cite{liu2013ambient}.
		Iteratively testing every possible rate is impractical.
		It can be considered that the eavesdropper can use MLK- or DL-based detectors for detecting AmB signals.
		However, MLK-based approaches require complete information on signal distribution and perfect CSI, while DL-based methods need a large amount of data and time to detect AmB signals properly.
		Given the above, the probability that the eavesdropper can successfully capture the AmB message is marginal in practice.
		
		To further evaluate the security of our proposed system, we consider the worst-case in which the eavesdropper knows the exact backscatter rate in advance, and thus they can capture the AmB message as well as the active message.		
		However, since the message encoding technique is unknown, the eavesdropper still does not know how to construct the original message based on the active and AmB messages.
		They only know that combining these two messages can decode the original message.
		As such, they must determine the position of all $I$ bits of the AmB message in the original message.
		To quantify the security of our proposed anti-eavesdropping solution in this case, this paper considers the guessing entropy {metric}~\cite{boztas1997comments}.}
	
	\begin{figure}[!]
		\centering
		\includegraphics[width=0.5\linewidth]{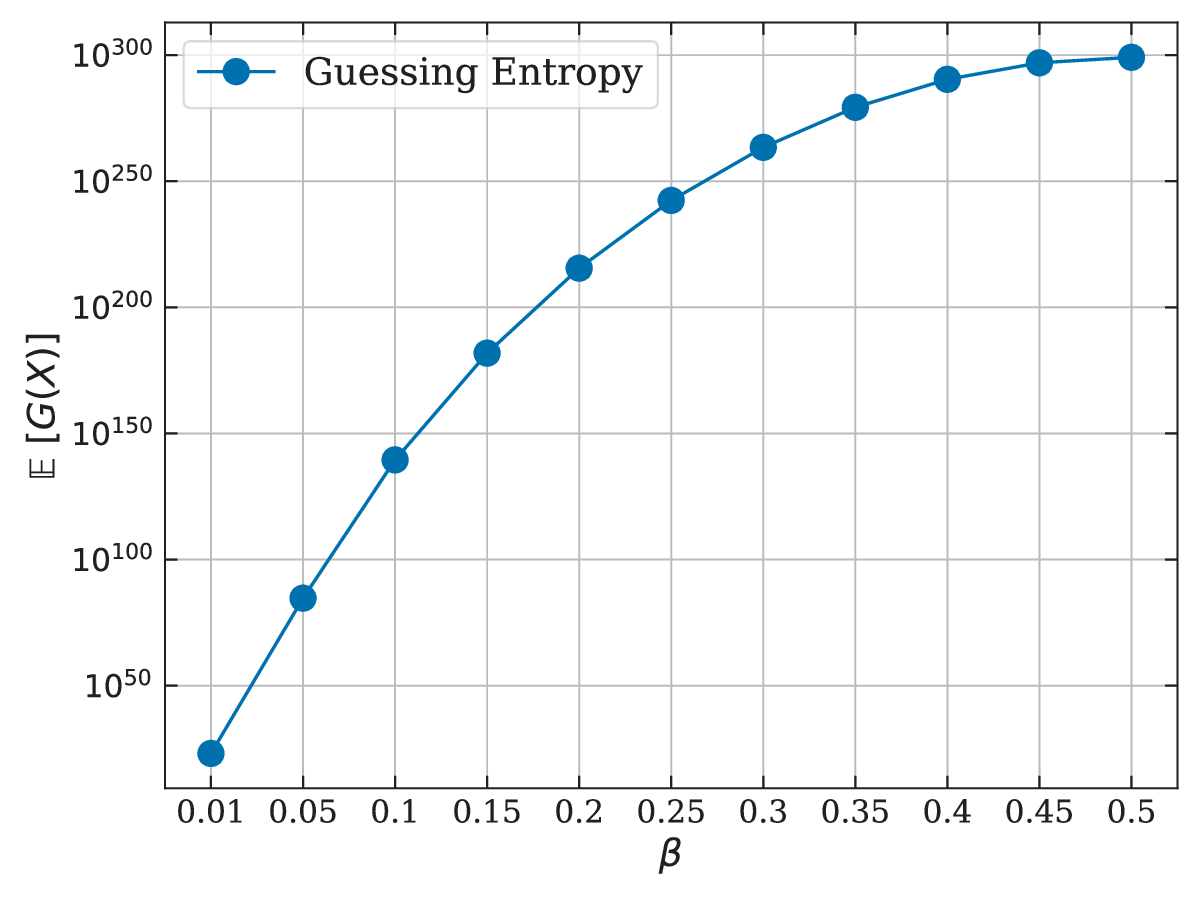}
		\caption{The upper bound of the expected number of guesses, i.e., $\mathbb{E}[G(X)]$,  vs. the message splitting ratio $\beta$.}
		\label{fig:guessing_entropy}
		\vspace{-20pt}
	\end{figure}
	Suppose that the original message has $P$ bits, the probability that the eavesdropper successfully finds the positions of $I$ bits in the original message by guessing is given by 
	\begin{equation}
		\label{eq:key_prob}
		p_I = \frac{1}{C^P_I} = \frac{I!(P-I)!}{P!}.
	\end{equation} 
	To correctly identify the position of $I$ bits, the eavesdropper needs to ask a number of questions of the kind ``does this correct?''. 
	Suppose that the eavesdropper follows an optimal guessing sequence, the average number of these questions defines the guessing entropy~\cite{boztas1997comments}.
	We can consider that the position of $I$ bits is analogous to the key $k_I$ for reconstructing the original message.	
	Let $X$ denote the random variable corresponding to the selection of key $k_I$ in the key set $\mathcal{K}$, and $X$ is determined by probability distribution $\mathcal{P}_k$.
	Note that in this work, the key set consists of all possible keys, and thus the size of $\mathcal{K}$ is $C^P_I$.
	One of the optimal brute-force attacks is the scheme where the eavesdropper knows $\mathcal{P}_k$, and {it sorts} the key set $\mathcal{K}$ in the descending order of key selection probability $p_k$ to form a guessing sequence $\bar{\mathcal{K}}$~\cite{boztas1997comments}.
	Then, it iteratively tries a key in the sorted key set $\bar{\mathcal{K}}$. 
	For that, the guessing entropy is defined as follows:
	\begin{equation}
		G(X) = \sum_{1\leq i \leq |\bar{\mathcal{K}}|} i p(x_i),
	\end{equation}
	where $i$ in the index in $\bar{\mathcal{K}}$, and $p(x_i)$ corresponds to the selection probability of a key at index $i$.
	In~\cite{boztas1997comments}, the upper bound of the expected number of guesses, i.e., $\mathbb{E}[G]$, for the eavesdropper with the optimal guessing sequence is given as follows:
	\begin{equation}
		\mathbb{E}[G(X)] \leq \frac{1}{2}\left[\sum_{i=1}^{|\bar{\mathcal{K}}|}\sqrt{p(x_i)}\right]^2 + \frac{1}{2}.
	\end{equation}

	In Fig.~\ref{fig:guessing_entropy}, we vary the message splitting ratio \mbox{$\beta = P/I$} to observe the guessing entropy of our framework.
	Here, \mbox{$\beta=0.1$} means that	$10\%$ of original message bits are transmitted by the AmB tag, and the remaining bits are transmitted by the transmitter.
	Since the AmB rate is significantly lower than the transmitter rate~\cite{guo2018noncoherent, zhang2018constellation}, the message splitting ratio $\beta$ is less than $0.5$ in practice.
	As shown in Fig.~\ref{fig:guessing_entropy}, when the ratio $\beta$ increases from $0.01$ to $0.5$, the guessing entropy also increases.
	It is stemmed from the fact that as the size of the AmB message increases, the probability of guessing successfully in one trial decreases, as implied by \eqref{eq:key_prob} when \mbox{$\beta < 0.5$}.
	It is worth mentioning that guessing entropy is the average number of questions asked by the eavesdropper to successfully construct the original message. 
	As such, the greater the value of $\mathbb{E}[G(X)]$, the higher the security level is. 
	
	\subsection{The Learning Process of DL-based AmB Signal Detector}
	\label{subsec:learning_process}
		
	The settings for the DL-based approaches are as follows.
	Note that since the architecture of DNN can significantly affect the performance of the DL-based AmB signal detector, it must be designed thoughtfully.
	For example, a DNN with more layers may perform better but demands more resources and takes longer time for training.
	Thus, this work designs a simple and lightweight DNN while still achieving good performance of detecting AmB signal.
	Particularly, our DNN has an input layer, four {FC} layers, and an out layer.
	The number of neurons in the input layer depends on the dimension of the input data $\mathbf{d}$, which is \mbox{$M\!\times\!M\!\times\!2\!\times3$} where $M$ is the number of the receiver's antennas.
	The first, second, and third FC layers have $600, 1000,$ and $600$ neurons, respectively.
	The training dataset is obtained from $10^4$ Monte Carlo runs.
	\begin{figure}[!]
		\centering
		\includegraphics[width=0.5\linewidth]{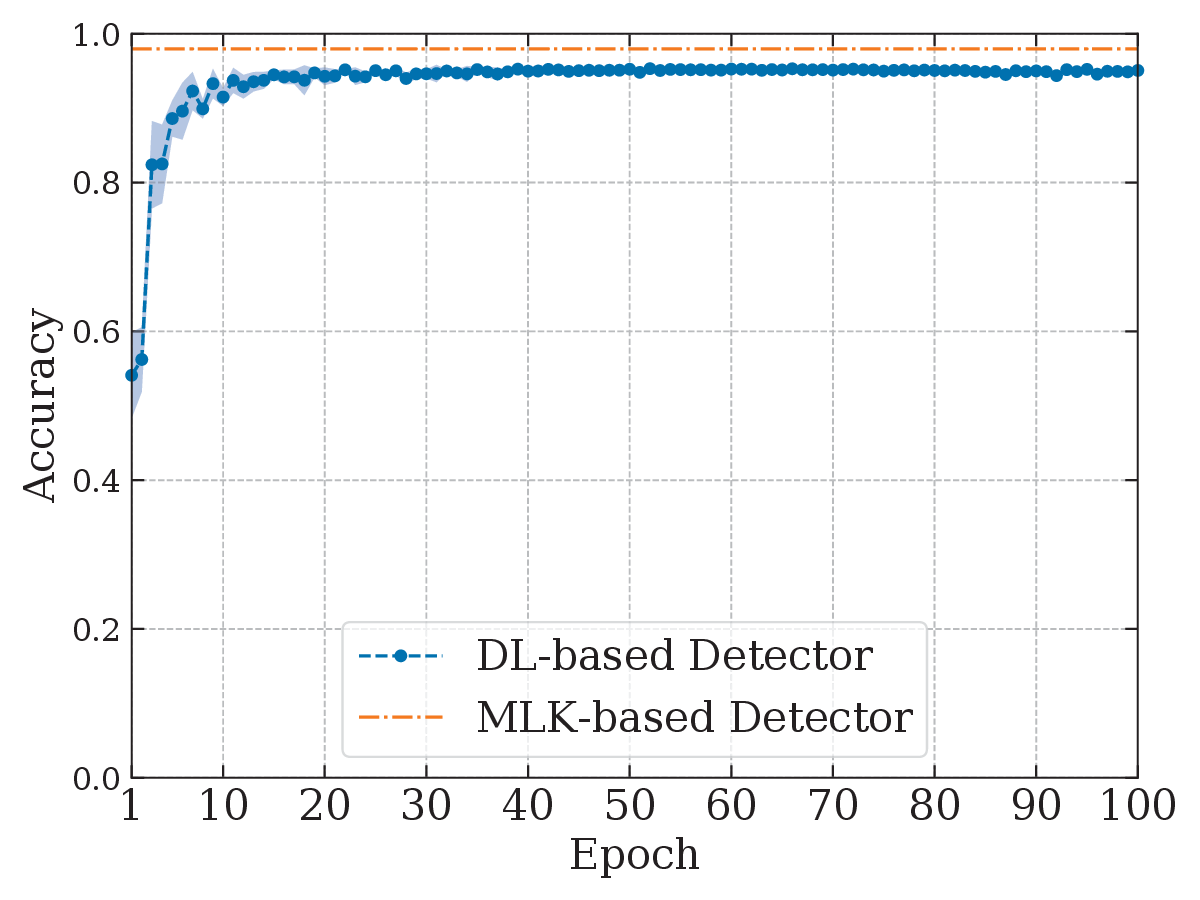}
		\caption{The convergence of DL-based AmB signal detector when \mbox{$\alpha_{dt}=1$}~dB and the receiver has $10$ antennas.}
		\label{fig:learning process}
		\vspace{-10pt}
	\end{figure}
	
	Fig.~\ref{fig:learning process} shows the convergence rate of our proposed DL-based AmB signal detector when the direct link SNR \mbox{$\alpha_{dt}=1$}~dB and the receiver has $10$ antennas.
	To train the DNN, we use the SGD with the learning rate of $0.001$~\cite{goodfellowdeep2016}.
	The batch size is set to $1,000$ data points, and thus each training epoch consists of $10$ learning iterations.
	In Fig.~\ref{fig:learning process}, the accuracy of the MLK-based detector is also presented as a baseline.
	It can be observed that after $30$ learning {epochs}, our proposed DL-based detector converges into a reliable model that can achieve an accuracy close to that of the MLK-base detector.
	Note that MLK-based detectors are considered the optimal signal detection scheme, but it is impractical due to high computing complexity and the requirement of perfect CSI~\cite{wubben2004near}.
	
	\subsection{BER Performance}
	\label{subsec:BER_performance}
	
	Now, we investigate the system's BER performance when varying the direct link (i.e., transmitter-to-receiver) SNR $\alpha_{dt}$ from $1$~dB to $9$~dB.
	The training process of our proposed DL-based detector is conducted in the same way described in~\ref{subsec:learning_process} with two settings, i.e., providing (i) the estimated CSI based on pilot bits, namely DL-eCSI, and (ii) the perfect CSI, namely DL-pCSI.
	We perform simulations with three antenna configurations for the receiver, i.e., $1$, $3$, and $10$ antennas.	 
	To obtain reliable results, both the MLK-based and the DL-based detectors are evaluated by performing Monte Carlo fashion with $10^6$ runs.
	Note that this work only focuses on the AmB link (i.e., the tag-receiver link), and thus we only obtain the BER of the AmB link since the BER of the {direct (transmitter-receiver)} link can be close to zero {with advanced modulation and channel coding techniques}.
	
	Fig.~\ref{fig:BER}(a) shows that when the direct link SNR, i.e., $\alpha_{dt}$, increases from $1$~dB to $9$~dB, the BERs of all approaches decrease. 
	The rationale behind is that the stronger the transmitter's active signals are, the stronger the signals backscattered by the AmB tag (i.e., AmB signals) are, so the lower the system BER is.
	It is also observed that the system performance improves (i.e., a decrease in BER) as the number of receiving antennas increases. 
	This is because the received signals at the receiver are enhanced by antenna gain, which is typically proportional to the number of antennas.
	Similar to the observation in Section~\ref{subsec:learning_process}, the BERs of our DL-based detector are close to those of the MLK-detector, an optimal signal detector, in all cases with 1, 3, and 10 antennas.
	Thus, these clearly show the effectiveness of our proposed DL-based approach.
	\begin{figure*}[!t]
		\centering
		\includegraphics[width=0.99\linewidth]{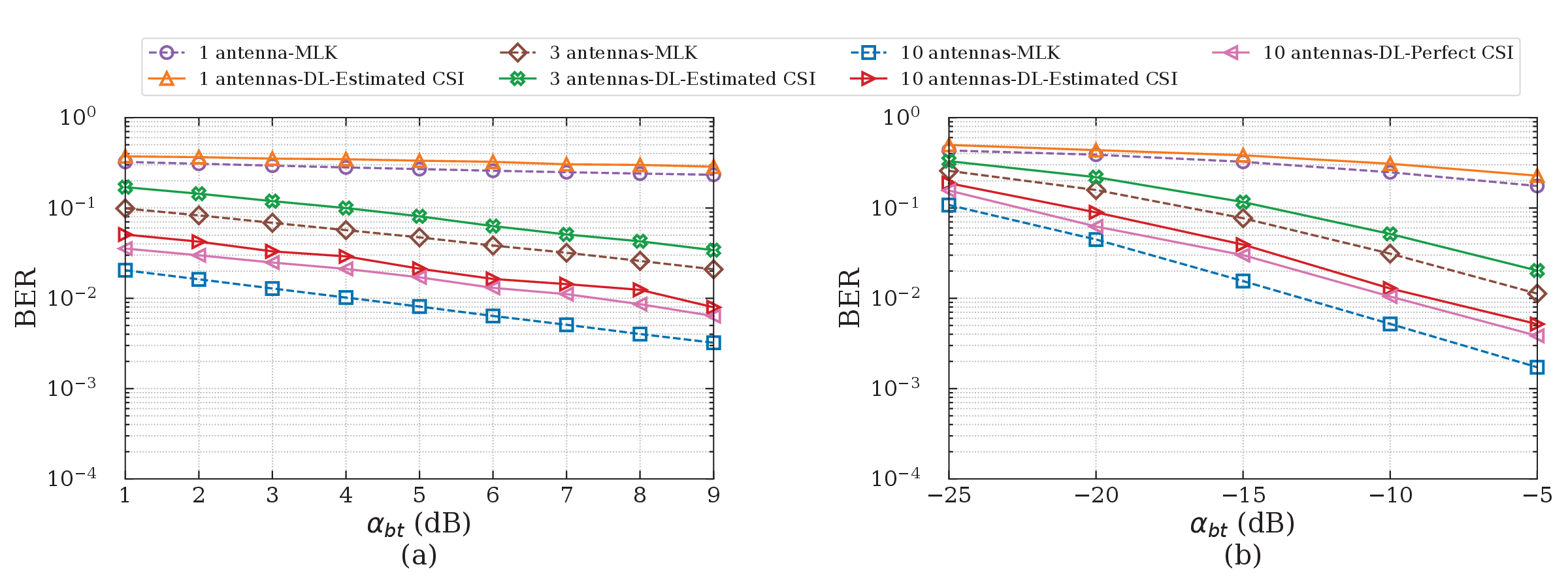}
		\caption{Varying (a) the transmitter-receiver SNR $\alpha_{dt}$ and (b) the tag-receiver SNR $\alpha_{bt}$.}
		\label{fig:BER}
		\vspace{-10pt}
	\end{figure*}
	
	%
	
	Next, we vary the tag-receiver SNR, i.e., $\alpha_{bt}$, from $-25$~dB to $-5$~dB, while the transmitter-receiver SNR $\alpha_{dt}$ is $7$~dB, as shown in Fig.~\ref{fig:BER}(b).
	Similar to observations in Fig.~\ref{fig:BER}(a), when $\alpha_{bt}$ and the number of antennas {increase}, the BERs of all approaches reduce. 	
	The rationale behind this is that when the received signal is stronger and the antenna gain is higher, the detector can attain a lower BER, indicating better performance.
	Again, the gap observed in BERs between our proposed DL-based detector and the MLK method are marginal.
	Interestingly, the DL-eCSI can achieve comparable BER results compared with that of DL-pCSI, as shown in Figs.~\ref{fig:BER}(a) and (b). 
	Thus, these results show that our proposed DL-based detector can perform well in practice {with only estimated CSI}.  
	
	\subsection{Meta-Learning for AmB Signal Detection}
	Now, we evaluate the proposed meta-learning approach for the AmB-signal detector with the following setups.
	We consider that the target task is detecting AmB signals under Rayleigh fading to better demonstrate the comparison between meta-learning-, DL-, and MLK- based approaches.
	The set of tasks for the learning process, i.e., training tasks, consists of (i) detecting AmB signals under Rician fading and (ii) detecting AmB signals under {WINNER} II fading~\cite{kyosti2007winner}.
	Algorithm~\ref{alg:meta} is used to train the DNN model, i.e., meta-model, with the training tasks' datasets.	
	Here, each training task's dataset contains $500$ data points, and the outer step size $\eta$ is set to $0.1$, similar to~\cite{nichol2018first}.	
	Then, the trained meta-model is trained with the dataset collected from the target task (i.e., detecting AmB signals under Rayleigh fading) by using SGD.
	For these simulations, we select two types of baselines{: (i) DL- and (ii) }MLK-based approaches.
	Three DL models are trained with different numbers of data points, including $50$, $10^3$, and $10^4$, in a similar procedure in Section~\ref{subsec:learning_process}. 
	In all approaches, there are $10$ antennas in the receiver.
	
	\begin{figure}[t]
		\centering
		\includegraphics[width=0.5\linewidth]{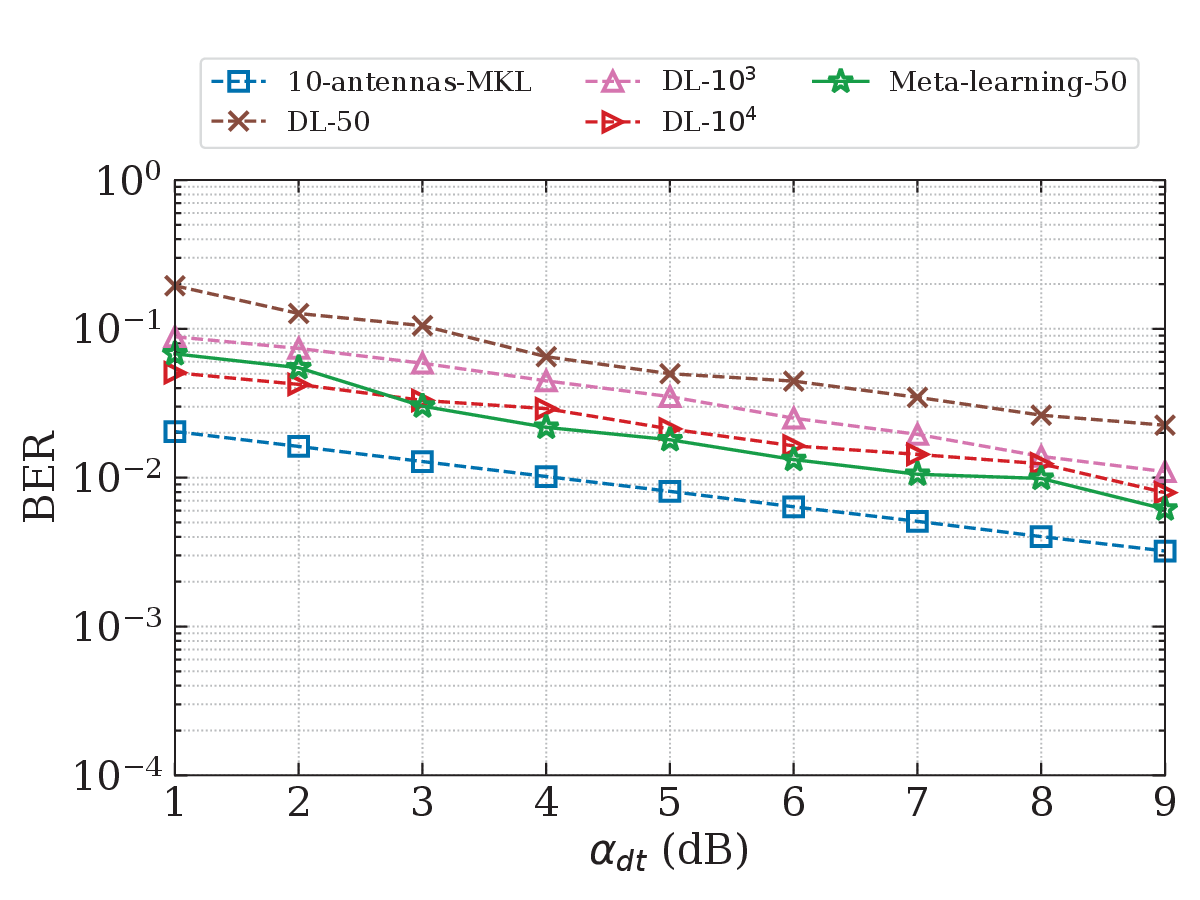}
		\caption{The BER performance of different learning approaches.}
		\label{fig:meta_BER}
		\vspace{-20pt}
	\end{figure}	
	
	First, we compare the BER performance between our proposed meta-learning-based approach and other baselines when varying the transmitter-receiver SNR $\alpha_{dt}$, as shown in Fig.~\ref{fig:meta_BER}.
	Similar to the observations in {Figs}.~\ref{fig:BER}(a) and (b)
, as the signal's strength increases (i.e., an increase of SNR), all approaches achieve a lower BER, indicating better performance.
It is observed that a DL model trained with more data points results in a better performance.
In particular, among DL-based approaches, DL with $10^4$ data points achieves the lowest BER, and DL with $50$ data points gets the highest BER.
The reason is that more data points can gain more knowledge to the DL model. Thus, it can reduce the over-fitting issue, i.e., performing well on a training dataset but poor on a test dataset, thereby improving the system performance~\cite{goodfellowdeep2016}.	
Interestingly, when the transmitter-receiver SNR \mbox{$\alpha_{dt}\geq3$}~dB, meta-learning with only $50$ data points can achieve the lowest BER compared with DL-based approaches that require up to $10^4$ data points.
Whereas, when $\alpha_{dt}$ is low (less than $3$~dB, meta-learning's BER is slightly higher than that of the DL with $10^4$ data points but still lower than those of DL with $10^3$ and $50$ data points.
These observations are stemmed from the fact that meta-learning has a generalization ability that can help a DNN model to learn a new task quickly with few data points~\cite{nichol2018first}. 

To further investigate the reliability of these learning approaches, we run each learning approach $20$ times to obtain the standard deviation of its BER results, as shown in Fig.~\ref{fig:meta_std}.
The simulation settings are set as those in Fig.~\ref{fig:meta_BER}.
Generally, the standard deviations of all approaches decrease as the transmitter-receiver SNR $\alpha_{dt}$ increases from $1$~dB to $10$~dB.
This is because the received signals contain more noise at a lower SNR, making the learning more challenging and leading to high deviation of results, i.e., more unstable results. 
In particular, at \mbox{$\alpha_{dt}=1$}, the standard deviation of DL with $50$ training data points is about $10$ times higher compared with that of the best learning-based approach i.e., DL with $10^4$ data points.
Again, when \mbox{$\alpha_{dt}\geq3$}~dB, the proposed meta-learning-based approach achieves {comparable} results as those of DL with $10^4$ data points.
Notably, when \mbox{$\alpha_{dt}\geq8$}~dB, the results of meta-learning and DL with $10^4$ data points close to the results of MLK.
When the \mbox{$\alpha_{dt}$} decreases to less than $3$~dB, the meta-learning still obtains very good results, one of the best approaches in terms of reliable performance.

\begin{figure}[t]
	\centering
	\includegraphics[width=0.5\linewidth]{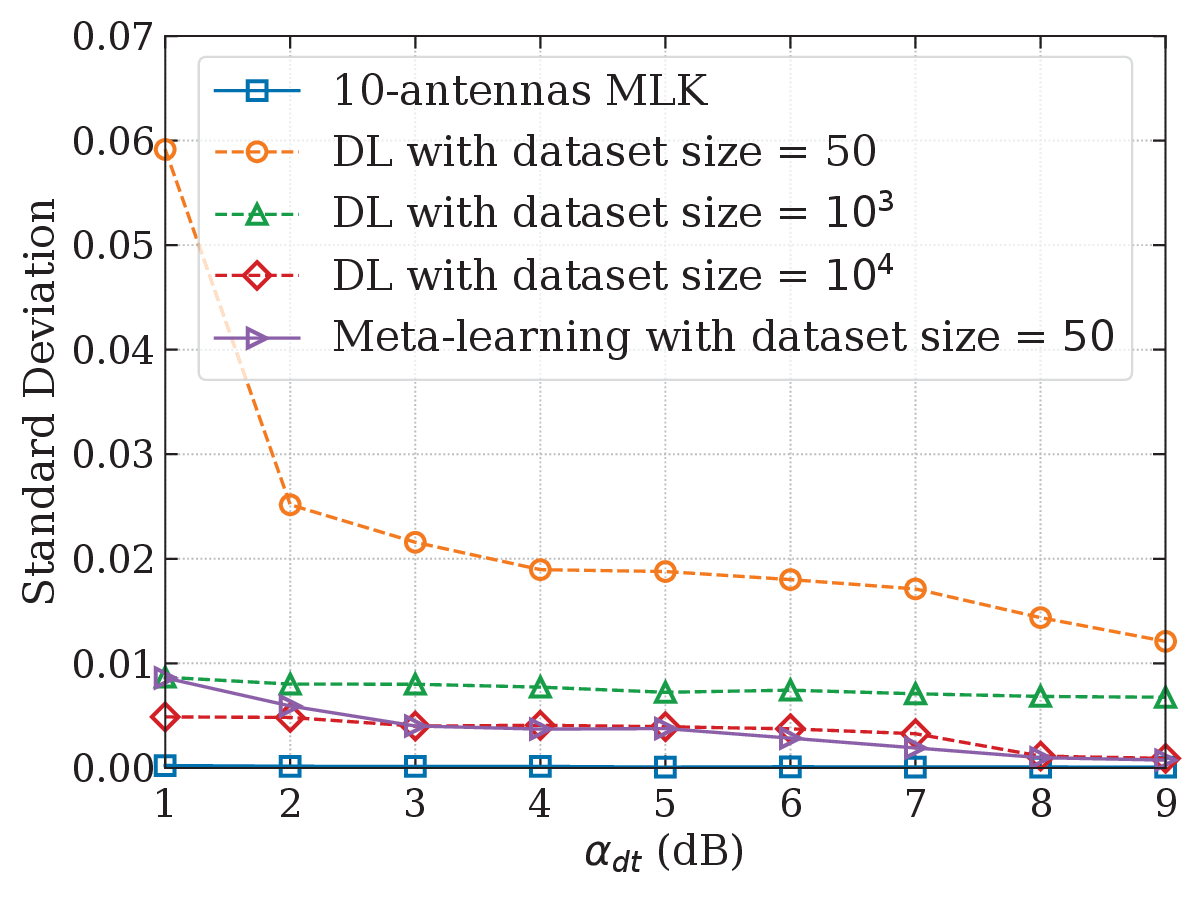}
	\caption{Reliability of learning process with different training datasets' sizes.}
	\label{fig:meta_std}
	\vspace{-10pt}
\end{figure}
\begin{figure}[t]
	\centering
	\includegraphics[width=0.5\linewidth]{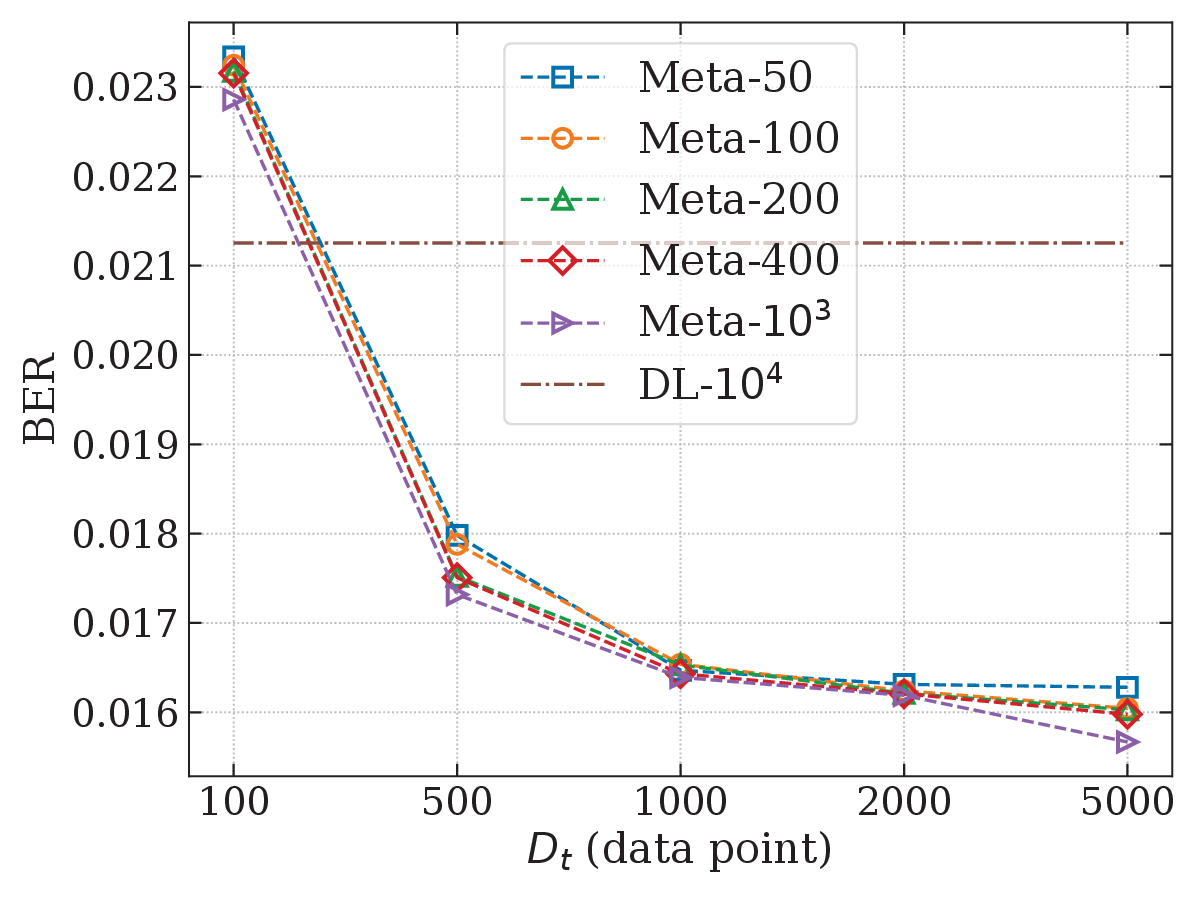}
	\caption{Varying the size of training tasks' datasets.}
	\label{fig:meta_BER_SNR5}
	\vspace{-10pt}
\end{figure}

Finally, to get insight into how {the amount of training data} can impact meta-learning, we set the size of each dataset collected from a training task to $D_t$ and then vary $D_t$ from $100$ to $500$ data points.
	Here, the transmitter-receiver SNR $\alpha_{dt}$ is set to $5$~dB.
	Fig.~\ref{fig:meta_BER_SNR5} {shows} the BER performance of models trained by meta-learning with $50$, $100$, $200$, $400$, and $10^3$ data points from the target task, namely Meta-50, Meta-100, Meta-400, and Meta-$10^3$.
	For comparing with the DL-based approach, this figure also presents the result of the model trained from scratch by SGD (as in Section \ref{subsec:learning_process}) at \mbox{$\alpha_{dt}=5$} with $10^4$ data points of the target task, namely \mbox{DL-$10^4$}.
	Similar to observations in Fig.~\ref{fig:meta_BER}, all approaches achieve lower BERs as the training data size increases, as shown in Fig.~\ref{fig:meta_BER_SNR5}.
	Interestingly, when the meta-learning training data points are adequate (e.g., $D_t\leq500$), even with only $50$ target task's data points, meta-learning approaches outperform the DL approach training with $10^4$ target task's data points.
	Given the above, it clearly shows the superiority of meta-learning to DL approaches in terms of data efficiency, making it more {applicable} in practical systems.

\section{Conclusion}
\label{sec:conclusion}
This work has introduced a novel anti-eavesdropping framework that can enable secure wireless communications leveraging the low-cost and low-complexity AmB technology. 
In particular, an original message {was} divided into two messages, and they are transmitted over two channels, i.e., direct transmission channel and AmB communication channel.
Since the AmB tag only backscatters the RF signals to transmits data, instead of generating active signals, {the eavesdropper is unlikely aware of} the existence of AmB transmissions.
Even if eavesdroppers can capture AmB signals, our proposed splitting message introduces much more difficulties for the eavesdroppers to derive the original message.
To effectively decode AmB signals at the receiver, we have developed a signal detector based on DL.
Unlike MLK-based solutions, our detector {did} not rely on a complex mathematical model {nor} require perfect CSI.	
To make the DL-based detector to quickly achieve good performance in new environments, we have developed the meta-learning technique for the training process.   
The simulation results have shown that our proposed approach can not only ensure the security of the data communications in the presence of eavesdroppers but also can achieve the BER performance that is comparable to the MLK-based detector, which relies on perfect CSI.

\appendices
\section{The proof of Theorem~\ref{theo:maxrate}}
\label{appendix:maximumbackscatterrate}
In this appendix, we {prove} Theorem~\ref{theo:maxrate} in a similar way as in~\cite{guo2018noncoherent}. 
	Since the mutual information between the AmB tag's state $e$, i.e., \mbox{$\mathcal{M}(e;\mathbf{y})$} and the received signals $\mathbf{y}$ defines the achievable AmB rate, we can obtain the maximum achievable AmB rate by~\cite{guo2018noncoherent}
	\begin{equation}
		\label{eq:Rb}
		R_{AmB}^* = \mathbb{E}[\mathcal{M}(e, \mathbf{y})],
	\end{equation}
	where 
	\begin{equation}
		\label{eq:I}
		\mathcal{M}(e, \mathbf{y}) = Z(\theta_0) - \mathbb{E}_{\mathcal{V}}[C(e|\mathcal{V})].
	\end{equation}
	In~\eqref{eq:I}, $Z(\theta_0)$ is the binary entropy function defined by~(\ref{eq:Ctheta0}), and $C(e|\mathcal{V})$ is the conditional entropy of $e$ given $\mathcal{V}$.
	\begin{equation}
		\label{eq:Ctheta0}
		Z(\theta_0) \triangleq -\theta_1 \log_2 \theta_1-\theta_0 \log_2 \theta_0 .
	\end{equation}
	Note that since $Z(\theta_0)$ does not depend on the channel coefficients, the maximum achievable AmB rate $R_{AmB}^*$ is given by
	\begin{equation}
		R_{AmB}^* = \mathbb{E}[\mathcal{M}(e, \mathbf{y})] = Z(\theta_0) - \mathbb{E}_{\mathcal{V}}[C(e|\mathcal{V})].
	\end{equation}
	In addition, given $\tilde{\mathbf{y}}$, the posterior probability of $e$ is calculated by
	\begin{equation}
		p(e=j|\mathcal{V}) = \frac{\theta_jp(\mathbf{y}|e=j)}{ \theta_1 p(\mathcal{V}|e=1) +\theta_0p(\mathcal{V}|e=0) },
	\end{equation}
	with $j \in \{0,1\}$. 
	By letting $\mu_j = p(e=j|\mathcal{V})$, the conditional entropy $C(e|\mathcal{V})$ is expressed by
	\begin{equation}
		C(e|\mathcal{V}) = -\sum_{j=0}^{1}\mu_j\log_2\mu_j = Z(\mu_0).
	\end{equation}
	Consequently, we have~\cite{guo2018noncoherent}
	\begin{equation}
		\begin{aligned}
			R_{AmB}^* &= Z(\theta_0) - \mathbb{E}_{\tilde{\mathbf{y}}}[Z(\mu)]\\
			& = Z(\theta_0) - \int_{\mathcal{V}}\big( \theta_1 p(\mathcal{V}|e=1) + \theta_0p(\mathcal{V}|e=0) \big)Z(\mu_0)d\mathcal{V}.
		\end{aligned}
	\end{equation}

\bibliographystyle{IEEEtran}
\bibliography{refs_AmB}

\end{document}